\documentclass[a4paper,12pt]{amsart}
\usepackage[left=1in, right=1in, top = 1.5in, bottom = 1.5in]{geometry}
\usepackage[round]{natbib}
\usepackage[utf8]{inputenc}
\usepackage[all]{xy}
\usepackage{amssymb}
\usepackage{amsmath}
\usepackage{csquotes}
\usepackage{subfig}
\usepackage{enumerate}
\usepackage{graphicx}
\usepackage{amsaddr}
\usepackage{color}
\usepackage{mwe}
\usepackage{appendix}
\usepackage{centernot}
\usepackage{pgfplots}
\pgfplotsset{compat=1.18}
\usepackage{mathtools}
\usepackage{comment}
\usepackage{makecell}
\usepackage[ruled,vlined]{algorithm2e}
\usepackage{enumitem}

\newcommand{\independent}{\perp\mkern-9.5mu\perp}

\usepackage{tikz}
\usetikzlibrary{arrows.meta,shapes}
\usetikzlibrary{arrows,shapes.arrows,shapes.geometric,shapes.multipart,
decorations.pathmorphing,positioning,shapes.swigs,}

\usepackage{setspace}

\newtheorem{proposition}{Proposition}
\newtheorem{corollary}{Corollary}
\theoremstyle{definition}
\newtheorem{assumption}{Assumption}
\newtheorem{hypothesis}{Hypothesis}

\newtheorem{definition}{Definition}
 
\theoremstyle{remark}
\newtheorem{remark}{Remark}

\title{A simple and powerful test of vaccine waning}

\author[Perényi, Janvin, Stensrud]
{Gellért Perényi\textsuperscript{1},\;
 Matias Janvin\textsuperscript{2,3},\;
 Mats J.\ Stensrud\textsuperscript{1,*}}

\address{
\textsuperscript{1}Institute of Mathematics, Ecole Polytechnique Fédérale de Lausanne, Switzerland ,\\
\textsuperscript{2}Oslo Centre for Biostatistics and Epidemiology, Department of Biostatistics, University of Oslo, Oslo, 0372, Norway,\\
\textsuperscript{3}Department of Surgery and Anesthesiology, Diakonhjemmet Hospital, Oslo, Norway
}
\thanks{\textsuperscript{*}\;Corresponding author: \texttt{mats.stensrud@epfl.ch}}

\begin{document}

\begin{abstract}
    Determining whether vaccine efficacy wanes is important for individual and public decision making. Yet, quantification of waning is a subtle task. The classical approaches cannot be interpreted as measures of declining efficacy unless we impose unreasonable assumptions. Recently, formal causal estimands designed to quantify vaccine waning have been proposed. These estimands can be bounded under weaker assumptions, but the bounds are often too wide to make claims about the presence of waning. We propose a different approach: a formal test to assess whether a treatment effect is constant over time at the individual level. This test provides a considerable power gain over existing approaches and is valid under interpretable assumptions in vaccine trials. We illustrate the increase in power through real and simulated examples, using three different approaches to compute the test statistics. Two of these approaches are based solely on summary data, accessible from existing clinical trials. Beyond our test, we also give new results that bound the waning effect. We use our methods to reanalyze data from a randomized controlled trial of the BNT162b2 COVID-19 vaccine. While prior analysis did not establish waning, our test rejects the null hypothesis of no waning.
\end{abstract}

\maketitle

\noindent
{\it Keywords:}  causal inference, challenge trial, hypothesis test, vaccine waning
\vfill

\newpage
\clearpage
    
\section{Introduction}\label{SEC: Intro}
Consider a randomized controlled trial (RCT) evaluating a vaccine against an infectious disease (e.g., HIV). Using data from the RCT, researchers estimated vaccine efficacy (VE) to be $0.8$ one month after vaccination. To assess if the VE changed at later times, the participants who were event-free in the first analysis were included in a subsequent analysis at three months that gave a $\widehat{VE}=0.6$, see Figure \ref{FIG: 2-arms}. The researchers concluded that the efficacy had decreased at three months; that is, the vaccine effect had waned. 

\begin{figure}
    \centering
    \includegraphics[width=0.7\linewidth]{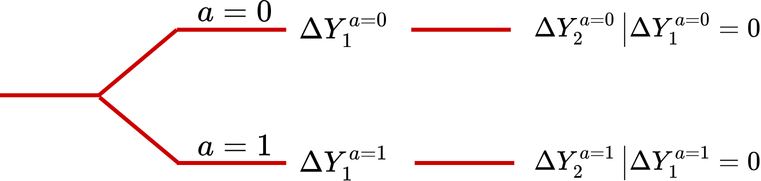}
    \caption{Illustration of outcomes in a 2-arm randomized controlled trial over two intervals. Second interval observation is conditional on being event-free at time 1.}
    \label{FIG: 2-arms}
\end{figure}

A key question is whether such a contrast of VEs at different times gives a meaningful notion of waning. To answer this question, we take for granted that the term waning alludes to a declining causal effect; that is, vaccine protection that decreases over time. To draw valid inference on a causal effect, the populations being compared under vaccination or no vaccination (e.g., placebo) should be exchangeable. Conceptually, this means that the distributions of risk factors for infection are expected to be similar in the two treatment groups  (see for example, \citet{hernan_causal_2023}). Randomization ensures exchangeability for the comparison at the time of the first analysis (at one month). By the time of the subsequent analysis (at three months), however, conditioning on being event-free induces post-randomization selection: the participants who were event-free under vaccination likely differ from those who were event-free without vaccination. There is an additional problem when comparing VEs over time. Even within the same treatment arms, the analyses at the two subsequent intervals are estimated in different populations, making the declining VEs hard to interpret. See Figure \ref{FIG: Invariant dist} for the changing distribution of susceptibility over time and across treatment arms.

These concerns are not merely theoretical; heterogeneity in susceptibility is widespread. To fix ideas, we give an extreme example. Consider a trial of a vaccine against CCR5-tropic HIV \citep{piccininni_immune-selection_2025}. Suppose that all individuals were in contact with the infectious agent, that is, they were naturally ``challenged", within the first month. The VE was estimated to be $0.8$ after one month. Suppose further that the individual-level effect of the vaccine is constant in time: more precisely, had we challenged individuals with an infectious agent immediately after a one-month isolation period, we would have exactly the same outcome as if we had challenged the individuals immediately after a three-month isolation period. Then, the untreated participants who were not infected after the challenge must be inherently immune to the disease (e.g., homozygous for the CCR5 $\Delta32$ gene variant). Thus, by the time of the second analysis, all unprotected trial participants, regardless of the arm, have already experienced the outcome, while those who are protected, either by the vaccine or their genetically induced immunity, will remain event-free at time 2, either by the lack of exposure or by their immunity. It follows that no new events will occur in either of the two arms, and the conventional VE estimate would be equal to $0$. This estimate, however, contradicts the fact that the vaccine effect, on an individual-level, is time-invariant.

This extreme example illustrates a point: conventional VE estimates cannot be used to establish waning. Such analyses are subject to selection bias due to depletion of susceptibles, corresponding to well-known issues with the interpretation of hazard ratios, see, e.g., \citet{hernan_hazards_2010, dumas_how_2025}. In practice, the magnitude of bias is driven by unknown heterogeneity in the outcome risk and can, in principle, be large, although likely not as extreme as in our illustrative example. However, we can quantify the difference between conventional waning estimands and causal waning estimands under given levels of heterogeneity, see Appendix \ref{APP: heterogen} for details.

Recognizing the problems with the conventional waning analyses,  \citet{janvin_quantification_2024} proposed an alternative estimand for VE, based on the idea of challenge trials, building on \citet{stensrud_identification_2023}. In particular, their estimand could be computed in a hypothetical target trial with controlled exposure of the participants \citep{bernstein_norovirus_2015, schmoele-thoma_vaccine_2022}. Such trials are rarely conducted, as exposing participants, regardless of their health status, to infectious pathogens is unethical \citep{hope_challenge_2004}. Vaccinating a (random) subset of participants before pathogen emergence is sometimes a compelling alternative to a challenge trial \citep{lipsitch_depletion--susceptibles_2019, ray_depletion--susceptibles_2020}, but requires knowledge of the timing of the infectious and non-infectious periods. While this design does not permit point identification of the challenge effects without additional assumptions, it can detect waning under mild assumptions. Furthermore, with data from conventional RCTs, the waning estimand of interest is only partially identifiable, i.e., it is bounded under reasonable assumptions. Here, we introduce another approach to assess the existence of vaccine waning. Our work is grounded on a formal causal formulation of a declining treatment effect, related to \citet{janvin_quantification_2024}. However, unlike \citet{janvin_quantification_2024}, we formulate a sharp null hypothesis of no waning and propose a test that is powerful, simple, and assumption-lean. In particular, the test can be conducted under reasonable assumptions in conventional vaccine trials, either using individual-level or summary data. We furthermore suggest a strategy in Appendix \ref{APP_sub: Challenge sens}, based on some additional in-principle testable assumptions, to improve the bounds derived by \citet{janvin_quantification_2024}. 

\section{Observed Data Structure}
Consider i.i.d.\ data from an RCT in which individuals are assigned to treatment $A \in \{0,1\}$ corresponding to placebo or vaccine. Define $Y_k \in \{0,1\}$ as an indicator of at least one outcome occurring by the end of interval $k$. Therefore, $Y_k\geq Y_{k'}$ for all $k>k'$. Then, $\Delta Y_k := Y_k-Y_{k-1}$ is an indicator of whether the first outcome has occurred in interval $k$. To simplify the presentation, we will treat these intervals as discrete. Yet, the test is valid in a continuous-time setting as well, with minor modifications. As the data are assumed to be i.i.d., the subscripts indicating individuals are omitted.
The notation in Figure \ref{FIG: 2-arms} stands for potential outcomes under different levels of treatment. For example, $\Delta Y_1^{a=0}$ is the outcome at time 1 under an intervention that sets the treatment to $a=0$.

As highlighted in Section \ref{SEC: Intro}, the VE at time 2 is not a causal contrast. Consider instead a contrast among those who would survive the first interval, regardless of the treatment assignment,

\begin{align*}
    &E[\Delta Y_2^{a=1}| \Delta Y_1^{a=1}=\Delta Y_1^{a=0}=0] \quad \text{vs} \\
    &E[\Delta Y_2^{a=0}| \Delta Y_1^{a=1}=\Delta Y_1^{a=0}=0].
\end{align*}
By restricting the analysis to those who survive under both treatment assignments, the two populations are comparable at time 2. This contrast is an example of principal stratum effects (PSEs) \citep{frangakis_principal_2002}.

However, we need to be careful when comparing effect contrasts among those who would survive under both levels of treatment at the second time to the same contrast among the general population at the first time: even if the analysis at each time has a causal interpretation, the two comparisons are computed in different populations. This complication concerning the PSE is particular to our setting and goes beyond the well-known critique of the PSE being a cross-world estimand \citep{robins_new_1986, richardson_single_2013, stensrud_conditional_2023}.  

To illustrate our target estimand, consider instead an unconventional randomized 4-arm trial where participants are randomly assigned to one of the two levels of treatment and one of the two possible times of exposure (visualized in Figure \ref{FIG: 4-arms}):

Define exposure at time $k$ as $E_k \in \{0,1\}$ for $k\in \{1,2\}$, corresponding to isolation ($E_k=0$), or exposure through some controlled procedure ($E_k=1$). We will not presume that we have data from this unconventional trial, but instead use this as a target trial, motivating our estimand. 

In our derivations, we will use the following assumption from \citet{stensrud_identification_2023}:
\begin{assumption}[Exposure necessity]\label{ASS: Exp nec}
    $$
    E_k^a =0 \implies  \Delta Y_k^{a}=0; \; E_2^{a, e_1=0}=0 \implies \Delta Y_2^{a, e_1=0} =0 ,
    $$
    for all $a \in \{0,1\}, k \in \{1,2\}.$
\end{assumption}
The intervention on exposure happens at two times in the 4-arm trial. The potential outcome at time 1 is defined under the intervention setting exposure to 1, and the potential outcome at time 2 is defined under an intervention setting exposure at time 1 to be 0 and exposure at time 2 to be 1. Thus, in this hypothetical trial, participants are exposed to the pathogen according to a controlled procedure at a specific time.

The potential outcome at time 2 is defined under an intervention that isolates individuals at time 1. Thus,  those who were assigned to $e_1=0, e_2=1$ will not develop the outcome at the first time; they cannot be infected without exposure, following Assumption \ref{ASS: Exp nec}. The random assignment of treatment and exposure implies that the populations in all four arms are exchangeable, see Figure \ref{FIG: Invariant dist} for an illustration.

\begin{figure}
    \centering
    \includegraphics[width=0.7\linewidth]{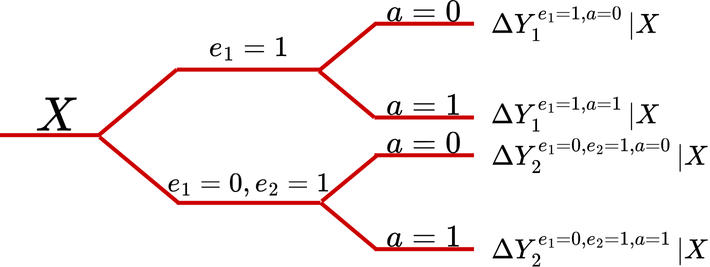}
    \caption{4-arm randomized controlled trial: Individuals are randomized to treatment and to the time of the exposure. Due to isolation, those who are assigned to time 2 exposure are exchangeable with those who are assigned to exposure at time 1. $X$ denotes the distribution of the baseline characteristics in the trial population, which is, by randomization and Assumption \ref{ASS: Exp nec}, expected to be identical in all 4 arms.}
    \label{FIG: 4-arms}
\end{figure}

As proposed by \citet{janvin_quantification_2024}, we define the challenge effects as follows:

\begin{definition}[Marginal challenge effect]\label{DEF: Chall Eff}
    \begin{align}
        VE_1^{\mathrm{challenge}} &= 1- \frac{E[\Delta Y_1^{a=1, e_1=1}]}{E[\Delta Y_1^{a=0, e_1=1}]},\label{EQ: VE_ch_1}\\
        VE_2^{\mathrm{challenge}} &= 1- \frac{E[\Delta Y_{2}^{a=1, e_1=0, e_2=1}]}{E[\Delta Y_{2}^{a=0, e_1=0, e_2=1}]} \label{EQ: VE_ch_2}
    \end{align}
\end{definition}
The challenge effect conditional on baseline covariates is defined analogously, see Appendix \ref{APP:Cond wane} for details.

The challenge effect quantifies the effect of the vaccine in each interval in a way that is insensitive to the observed exposure pattern and the changing distribution of the unobserved heterogeneities over time. If $VE_1^{\mathrm{challenge}}$ exceeds $VE_2^{\mathrm{challenge}}$, then the protective effect of the vaccine has decreased over time, within the population that has been defined pre-randomization 
\citep{janvin_quantification_2024}. 

While a ``challenge trial", such as the one illustrated in Figure \ref{FIG: 4-arms}, is sufficient to identify both $VE_1^{\mathrm{challenge}}$ and $VE_2^{\mathrm{challenge}}$, conducting such a trial is usually unethical, as deliberately exposing individuals to the pathogen could cause  severe harm; hence, they are rarely conducted. Challenge trials can be run in healthy volunteers to minimize the risk of severe adverse effects; however, this approach typically fails to reflect the population of interest.

\citet{janvin_quantification_2024} derived conditions under which a waning parameter can be partially identified, even if the data are from a standard 2-arm RCT, in which exposure was neither intervened on nor measured, such as the one illustrated in Figure \ref{FIG: 2-arms}. Yet, the bounds derived for the second interval challenge effect can be wide, often including the first interval challenge effect, which prohibits a firm claim about the waning of the vaccine.

\begin{figure}
    \centering
    \includegraphics[width=1\linewidth]{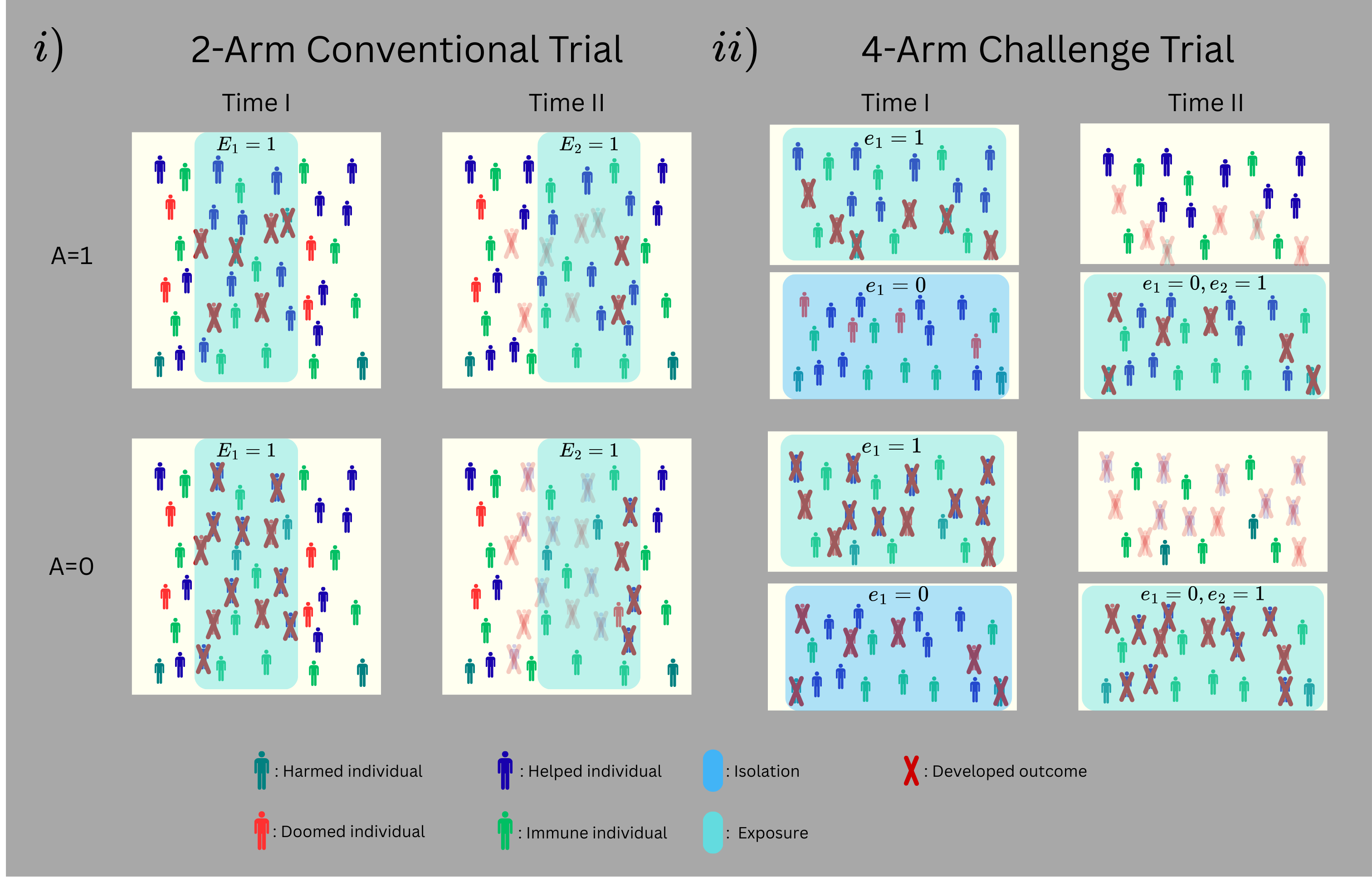}
    \caption{Illustration of two different trial designs. In conventional trials, Subfigure i), the populations under treatment and no treatment are exchangeable at baseline, without intervening on the exposure. Among exposed at time 1, the treated and the untreated groups are exchangeable as well, assuming treatment blinding, yielding an observed vaccine efficacy (VE) of $0.5$ in this example. However, by time 2, the depletion of susceptibles (i.e., selective exclusion from the at-risk population of the doomed and harmed or doomed and helped strata, depending on the assigned treatment), alters the distribution of susceptibility of the at-risk population. Even without waning, as the number of susceptibles declines more rapidly among the untreated, the observed vaccine efficacy declines to approximately $0.41$. In contrast, in a challenge trial, Subfigure ii), the four arms are exchangeable at baseline. Isolation ensures participants assigned to exposure at time 2 remain at risk until then, making them exchangeable at the time of the exposure as well. Consequently, the challenge vaccine efficacy is $0.5$ at both times, equal to the proportion in the doomed-or-harmed populations divided by the proportion in the doomed-or-helped populations in this hypothetical example.}
    \label{FIG: Invariant dist}
\end{figure}

\section{A formal test}\label{SEC: Inference}

Consider the following sharp null hypothesis of no waning:

\begin{hypothesis}[Sharp null hypothesis]\label{HYP: formal null}

    $$
        \Delta Y_1^{a, e_1=1} = 
        \Delta Y_2^{a, e_1=0, e_2=1} \; w.p.1 \; \forall  a \in \{0,1\}.
    $$
\end{hypothesis}

The sharp null hypothesis (\ref{HYP: formal null}) states that anyone who would experience the outcome under exposure would do so regardless of whether exposure occurs at time 1 or at time 2. In short, individual susceptibility to exposure is constant across intervals and independent of exposure timing. One interpretation of Hypothesis \ref{HYP: formal null} is that the vaccine has an all-or-nothing effect \citep{halloran_thirty-five_2024}; that is, under exposure and a given level of treatment, the potential outcomes are deterministic. This test might reject the null hypothesis of no sharp waning when potential outcomes are sensitive to exposure intensities, which can arise in settings with multiple strains circulating in the population. Thus, we also introduce an alternative test, which allows the exposure to take a continuous value, building on \citet{janvin_quantification_2024}, see  Appendix \ref{APP: multi exp} for details. However, this test often has substantially lower power.

We further restrict the exposure intervention to have no carry-over effect on the outcome. That is, exposure at time 1 cannot affect the outcome at time 2 except through survival. This restriction is compatible with a fixed incubation period of an infectious agent. 
\begin{assumption}[Exposure exclusion restriction]\label{ASS: exp eff restriction}
    $$\Delta Y_2^{a, e_1, \delta y_1=0, e_2=1}=\Delta Y_2^{a, \delta y_1=0, e_2=1} \quad \forall a, e_1 \in \{0,1\}^2$$
\end{assumption}

Assumption \ref{ASS: exp eff restriction} holds if all causal paths from $E_1$ to $\Delta Y_2$ are intersected by $\Delta Y_1$ or $E_2$. However, it fails if exposure followed by survival at time 1 creates sustained immunity that prevents the development of the outcome at time 2, which can occur, for example, if asymptomatic infections are misclassified as no infections.

\begin{remark}[Null hypothesis under Assumption \ref{ASS: exp eff restriction}]\label{REM: Better null}
    Suppose that Assumptions \ref{ASS: Exp nec}, \ref{ASS: exp eff restriction}, and \ref{ASS: exp exch} hold. An alternative formulation of the effect restriction using nested counterfactuals is 
    $$
       \Delta Y_2^{a, e_1, e_2=1}=  \Delta Y_2^{a, e_1, \Delta Y_1^{a, e_1}, e_2=1}.
    $$
    By Assumption \ref{ASS: Exp nec} and \ref{ASS: exp exch}, $\Delta Y_1^{a, e_1=0}=0$, thus 
    \begin{align*}
        &\Delta Y_2^{a, e_1=0, e_2=1}=  \Delta Y_2^{a, e_1=0, \Delta Y_1^{a, e_1=0}, e_2=1}\\
        &=\Delta Y_2^{a, e_1=0, \delta y_1=0, e_2=1} = \Delta Y_2^{a, e_1=1, \delta y_1=0, e_2=1}. 
    \end{align*}
         
    Therefore, under  Assumptions \ref{ASS: Exp nec} and \ref{ASS: exp eff restriction}, Hypothesis \ref{HYP: formal null} is equivalent to 
    \begin{align*}
        &\Delta Y_1^{a, e_1=1} \overset{H_0}{=} 
        \Delta Y_2^{a, e_1=0, e_2=1} \\
        &=\Delta Y_2^{a, e_1=1, \delta y_1=0, e_2=1}=\Delta Y_2^{a, e_1=0, \delta y_1=0, e_2=1}.
    \end{align*}
\end{remark}

Suppose Hypothesis \ref{HYP: formal null} holds. Consider an individual who was exposed in the first interval but remained event-free. This individual will remain event-free at time 2 as well, regardless of the time 2 exposure. This fact is related to a constraint on the distribution of joint counterfactuals under Hypothesis \ref{HYP: formal null}, which we discuss in Appendix \ref{APP: PS general}.

\begin{remark}[Population- vs individual-level waning]\label{REM: waning and sharp}
    In principle, the challenge effects, given in Equations \eqref{EQ: VE_ch_1} and \eqref{EQ: VE_ch_2}, can be equal at times 1 and 2 even if $\Delta Y_1^{a, e_1=1}\neq \Delta Y_2^{a, e_1=0, e_2=1} w.p >0$. However, assume that
    $$\Delta Y_1^{a=0, e_1=1} = \Delta Y_2^{a=0, e_1=0, e_2=1} \; w.p.1, $$
    but not necessarily for $a=1$. Conceptually, this assumption states that the risk of infection under placebo does not depend on the time of exposure. An argument supporting this assumption is that no active ingredient is administered in the placebo arm, and thus, the immune system's response to exposure should not depend on the time to first exposure.  Then, the population-level notion of no vaccine waning is equivalent to the sharp null Hypothesis \ref{HYP: formal null}, if there is no perfect cancellation in the following sense: exactly the same number of people experience ``positive waning", that is, the improvement of the vaccine over time, as the number of people who experience ``negative waning", that is, the deterioration of the vaccine over time. Such a perfect cancellation of positive and negative effects is unlikely. Thus, we can often intuitively treat the sharp null of no waning, and population-level no waning ($VE_1^{\mathrm{challenge}}=VE_2^{\mathrm{challenge}})$ as equivalent. We provide a more formal argument building on the principal strata framework in Appendix \ref{APP: PS general}.
\end{remark}

The next results will lead to a simple testing strategy of the null hypothesis. To describe this result, consider first another assumption that is plausible in a blinded trial:

\begin{assumption}[No treatment effect on exposure in the unexposed]\label{ASS: blind exp}
    \begin{align*}
        & E_1^{a=0}=E_1^{a=1}&&\text{and} && E_2^{a=0, e_1=0}=E_2^{a=1, e_1=0}
    \end{align*}
\end{assumption}

Assumption \ref{ASS: blind exp} requires that there is no effect of treatment on time 1 exposure and on time 2 exposure following isolation. Conceptually, there is no causal path from $A$ to $E_1$. Furthermore, there is no causal path from $A$ to $E_2$ except through $E_1$ or $\Delta Y_1$. 
In a blinded trial, the exposure-seeking behavior at time 1 should not be a function of the actual treatment received.
Yet, at time 2, those who developed the outcome at time 1, might reduce their contacts, for example, to minimize the spread of the disease. If the vaccine has an effect on the outcome at time 1, then the time 2 exposure might decrease at an increased rate among the placebo recipients.
However, under isolation, the treatment would not affect their survival, by Assumption \ref{ASS: Exp nec}. Therefore, if the treatment is blinded, the time 2 exposure would be the same under both treatment and no treatment, making Assumption \ref{ASS: blind exp} plausible.

Assumption \ref{ASS: blind exp} is less restrictive than common assumptions in infectious disease epidemiology. In particular, random mixing is often assumed, which would require unconditional independence between the treatment and the exposures at both times, which implies Assumption \ref{ASS: blind exp}. Moreover, under random mixing, the two exposures cannot be confounded; this is a restriction that Assumption \ref{ASS: blind exp} does not impose. Consider also the following exchangeability condition:

\begin{assumption}[Exposure exchangeability]\label{ASS: exp exch}
    \begin{align*}
        E_1^{a} &\independent \Delta Y_2^{a, e_1=0}|E_2^{a, e_1=0}, &E_1^{a}&\independent \Delta Y_1^{a, e_1},\\
        E_2^{a, e_1=0} &\independent \Delta Y_2^{a, e_1=0, e_2}.
    \end{align*}
\end{assumption}

Exchangeabilities between the potential outcomes and the time 1 exposure are guaranteed by the lack of exposure-outcome confounders. However, in practice, exposure at time 1 and time 2 can be confounded by, for example, behavioral patterns. Thus, to make the independence between $E_1^{a}$ and $\Delta Y_2^{a, e_1=0}$ plausible, we have to control for the time 2 exposure.
Moreover, $E_2^{a, e_1=0}\independent \Delta Y_2^{a, e_1=0, e_2=1}$ is equivalent to $E_2^{a, e_1=0}\independent \Delta Y_2^{a, e_1=0, e_2=1}|\Delta Y_1^{a, e_1=0}$ under  Assumption \ref{ASS: Exp nec}. Conceptually, this independence holds in the absence of unmeasured exposure-outcome confounding. For ease of exposition, we present our results and assumptions without conditioning on measured covariates (confounders). However, the results generalize straightforwardly to settings with baseline covariates, which make the identifying assumptions more plausible in many applications, see Appendix \ref{APP:Cond wane} for details. Even when the measured baseline variables are conditioned on, some confounders that affect both the exposures and the outcomes can remain unmeasured. The more general Assumption \ref{ASS: cond exp exch} can be violated when we fail to adjust for such confounders. However, the resulting bias is expected to be modest when the influence or prevalence of the unmeasured confounder is limited.  When the investigators suspect substantial bias due to unmeasured confounding, they can mitigate such effects using negative control outcomes, such as non-vaccine-targeted infections, see \citet{ashby_debiasing_2025}. Adjustment for latent confounding is also motivated by the correlated-infections framework of \citet{farrington_correlated_2013} and \citet{unkel_time_2014}, which show that the correlations between infections within individuals can be informative about the extent of heterogeneity in effective contact rates, i.e, heterogeneity in exposure.

Under the directed acyclic graph (DAG) proposed by \citet{janvin_quantification_2024}, Assumption \ref{ASS: exp exch} holds. See Appendix \ref{APP: on Ass-s} for a replication of the DAG and a discussion on the Assumptions in \citet{janvin_quantification_2024}. 

The four assumptions introduced so far allow us to derive a simple equality that motivates our hypothesis test: 

\begin{proposition}[Invariant incidence rates]\label{PROP: inv incidence}
    Suppose that Assumptions \ref{ASS: Exp nec}-\ref{ASS: exp exch} hold. Under Hypothesis \ref{HYP: formal null}
    $$
        IR_1:= \frac{E[\Delta Y_1|A=1]}{E[\Delta Y_1|A=0]} = \frac{E[\Delta Y_2|A=1]}{E[\Delta Y_2|A=0]}:=IR_2.
    $$
\end{proposition}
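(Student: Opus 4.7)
My strategy is to identify each observed incidence ratio $IR_k$ with the challenge-trial contrast $E[\Delta Y_k^{a=1,\ldots}]/E[\Delta Y_k^{a=0,\ldots}]$ appearing in Definition \ref{DEF: Chall Eff}, and then invoke Hypothesis \ref{HYP: formal null} to equate the two.

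\textbf{Identifying $IR_1$.} By Assumption \ref{ASS: Exp nec}, an event in the first interval requires $E_1=1$, so consistency gives $\Delta Y_1 = \Delta Y_1^{A,e_1=1}\mathbf{1}\{E_1=1\}$. Randomization of $A$ then yields
\[ E[\Delta Y_1 \mid A=a] \;=\; E\!\left[\Delta Y_1^{a,e_1=1}\,\mathbf{1}\{E_1^{a}=1\}\right]. \]
The second independence in Assumption \ref{ASS: exp exch} factors this as $E[\Delta Y_1^{a,e_1=1}]\,P(E_1^{a}=1)$, and by Assumption \ref{ASS: blind exp} $P(E_1^{a}=1)$ does not depend on $a$, so it cancels in the ratio, giving $IR_1 = E[\Delta Y_1^{a=1,e_1=1}]/E[\Delta Y_1^{a=0,e_1=1}]$.

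\textbf{Identifying $IR_2$.} This is where the null does the real work. By Assumption \ref{ASS: Exp nec} a time-2 event requires $E_2=1$. Using Remark \ref{REM: Better null}, under Hypothesis \ref{HYP: formal null} anyone with $E_1=1$ and $\Delta Y_1=0$ also satisfies $\Delta Y_2=0$: having survived the first exposure, the null forces them to survive a later one. Hence $\Delta Y_2\neq 0$ forces $\{E_1=0,\,E_2=1\}$, and by consistency and randomization
\[ E[\Delta Y_2 \mid A=a] \;=\; E\!\left[\Delta Y_2^{a,e_1=0,e_2=1}\,\mathbf{1}\{E_1^{a}=0\}\,\mathbf{1}\{E_2^{a,e_1=0}=1\}\right]. \]
Conditioning on $E_2^{a,e_1=0}$ and applying the first and third independencies of Assumption \ref{ASS: exp exch} (the latter to replace $\Delta Y_2^{a,e_1=0}$ with $\Delta Y_2^{a,e_1=0,e_2=1}$ and detach it from $E_2^{a,e_1=0}$; the former to detach it from $E_1^{a}$) factors the expression as $E[\Delta Y_2^{a,e_1=0,e_2=1}]\,P(E_1^{a}=0,\,E_2^{a,e_1=0}=1)$. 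Assumption \ref{ASS: blind exp} then makes the joint law of $(E_1^{a},E_2^{a,e_1=0})$ independent of $a$, so this probability cancels in $IR_2$, yielding $IR_2 = E[\Delta Y_2^{a=1,e_1=0,e_2=1}]/E[\Delta Y_2^{a=0,e_1=0,e_2=1}]$.

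\textbf{Closing and anticipated obstacle.} Hypothesis \ref{HYP: formal null} directly equates $\Delta Y_1^{a,e_1=1}$ with $\Delta Y_2^{a,e_1=0,e_2=1}$ with probability one for each $a$, so the numerators and denominators of $IR_1$ and $IR_2$ coincide. The delicate step I expect to be the bottleneck is the $IR_2$ identification: I need to justify rigorously that the $\{E_1=1\}$ contribution to $E[\Delta Y_2\mid A=a]$ vanishes under the null (which requires chaining Hypothesis \ref{HYP: formal null}, Remark \ref{REM: Better null}, and Assumption \ref{ASS: Exp nec}), and I must use the conditional and marginal independencies of Assumption \ref{ASS: exp exch} in the right order to factor out the two exposure indicators without accidentally invoking a joint independence that the assumption does not provide. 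Once that bookkeeping is in place, the rest of the argument is mechanical.
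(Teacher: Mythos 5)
Your proposal is correct and follows essentially the same route as the paper's proof: identify $IR_1$ with $1-VE_1^{\mathrm{challenge}}$ via exposure necessity, consistency, exposure exchangeability and blinding; show the $E_1=1$ contribution to $E[\Delta Y_2\mid A=a]$ vanishes under the null through Remark \ref{REM: Better null}; factor out the joint exposure probability and cancel it by Assumption \ref{ASS: blind exp}; and close with the sharp null equating the counterfactual means. The only bookkeeping point (which you already flag) is that the detachment must proceed by first removing $E_1^{a}$ via the conditional independence given $E_2^{a,e_1=0}$, toggling between $\Delta Y_2^{a,e_1=0}$ and $\Delta Y_2^{a,e_1=0,e_2=1}$ by consistency, and only then removing $E_2^{a,e_1=0}$ via the marginal independence — exactly the order used in Appendix \ref{APP: Prop proof}.
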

See Proof in Appendix \ref{APP: Prop proof}.

Thus, Proposition \ref{PROP: inv incidence} allows us to test Hypothesis \ref{HYP: formal null} by comparing incidence ratios ($IR_1$, $IR_2$). Under Assumptions \ref{ASS: Exp nec}-\ref{ASS: exp exch}, the test is only informative about the existence of waning, but not the extent of it. However, if a further monotonicity assumption is imposed on the potential outcomes over time, a direct correspondence can be established between the test statistics and the ratio $\left(1-VE_2^{\mathrm{challenge}}\right)/\left(1-VE_1^{\mathrm{challenge}}\right)$, which provides a measure of the extent of waning. See Appendix \ref{APP: p-val} for details. 

We emphasize that incidence ratios are not hazard ratios, because time-varying hazard ratios (HRs) are often used in the literature when studying the waning of a vaccine. In a discrete-time setting, the observed hazard ratios are 
\begin{align*}
    &HR_1 := \frac{E[\Delta Y_1|A=1]}{E[\Delta Y_1|A=0]} && HR_2:=\frac{E[\Delta Y_2|\Delta Y_1=0, A=1]}{E[\Delta Y_2|\Delta Y_1=0, A=0]}. 
\end{align*}
At the first time, all individuals are event-free, regardless of which treatment arms they were randomized to. Thus, $IR_1=HR_1$. By Proposition \ref{PROP: inv incidence}, under the null hypothesis $HR_1=HR_2$, if and only if $P(\Delta Y_1=0|A=0)=P(\Delta Y_1=0|A=1)$, that is, the vaccine has no population-level effect on survival at the first time. Assuming that no effect of the vaccine is observed at the beginning of the study is unrealistic. Therefore, a change in the $HR$ over time, unlike a change in the $IR$ over time, is not a measure of waning. We quantify the bias of studying $HRs$ instead of $IRs$ in Appendix \ref{APP: Hazard}, and illustrate the performance of the hazard ratio-based testing in Appendix \ref{APP: Simulation}.

The test itself is suitable for assessing the presence of waning, however, it is generally uninformative of the magnitude of waning. In Appendix \ref{APP: alternative}, we discuss the interpretation of the ratio of the incidence ratios when the alternative of the null hypothesis holds, and possible improvements on the bounds of \citet{janvin_quantification_2024} when waning is present.

\section{Estimation and Data analysis}

The test can be implemented, using summary data only, see Appendix \ref{APP: Variance} for details. Thus, most published vaccine trials can be reanalyzed, given that they have the incidence recorded at least at two times. First, we revisit the BNT162b2 trial analyzed by \citet{janvin_quantification_2024} and show that our test is able to detect waning in the same setting in which they could not claim waning previously. Then we give an example where the test does not reject the null hypothesis, and argue that the test result agrees with biological explanations, illustrated by a \textit{Helicobacter pylori} vaccine study.

\citet{janvin_quantification_2024} have analyzed a blinded RCT, published by \citet{thomas_safety_2021}, in which trial participants were randomized to receive a placebo ($A=0$) or the BNT162b2 COVID-19 vaccine ($A=1$). We are following the same division of the follow-up to the first and second interval as \citet{janvin_quantification_2024} in their main analysis, and an identical estimation procedure of the conditional expectation $E[Y_k|A=a]$ for $a \in \{0,1\}, \, k \in\{1,2\}$, where $Y_k$ is the cumulative incidence by the end of interval $k$. Since their assumptions only allowed partial identification of $VE_2^{\mathrm{challenge}}$, they compared the point-identified $\widehat{VE}_1^{\mathrm{challenge}}$ to the estimated lower and upper bounds of the second interval challenge effects. $\widehat{\mathcal{L}}, \widehat{\mathcal{U}}$, respectively. Although their point estimates suggested a waning effect of the vaccine, the null hypothesis of no waning could not have been rejected.

In comparison, the point estimate of $\widehat{IR}_1/\widehat{IR}_2$ is $0.48$ $\big(P=0.01,\;  95\%\text{ CI } [0.279, 0.825]\big)$, thus clearly rejecting our null hypothesis of no change in the potential outcome of an individual upon exposure over time, at the $5\%$ level. Unless the condition in Remark \ref{REM: waning and sharp} holds, i.e., that the number of people for whom the vaccine’s effect improves over time equals the number for whom its efficacy decreases, the rejection can be interpreted as evidence of a change in the population challenge effect over time, i.e., waning vaccine efficacy.

Consider now a second example. \citet{zeng_efficacy_2015} performed a double-blind, placebo-controlled RCT, to assess the efficacy, safety and immunogenicity of the Helicobacter pylori vaccine among children in China. The trial participants were randomized 1:1 to the two arms and 99\% of them completed the three-dose vaccination schedule. The primary analysis was performed among these participants (per-protocol-analysis) to assess the H pylori infection in the first 12 months, following the completion of the vaccine schedule. 

Incidence data has been published at months 4,8,12,24, and 36. To compare with the primary analysis, we compared the incidence ratio until month 8 ($IR_1$), to the incidence ratio between month 8 and month 12 ($IR_2$). The choice of time intervals is application-specific and should be guided by the scientific question and, where relevant, public-health considerations. For example, decision-makers may wish to assess waning over periods aligned with feasible booster schedules. If booster doses can be administered every six months, one relevant comparison is vaccine efficacy during 0–6 months versus 6–12 months \citep{park_comparing_2024}. In the first period, there were 36 and 10 events for 1416 and 1403.6 person-years at risk in the placebo and treatment arms, respectively. Following that, in the second period, there were 14 and 4 events for 673.6 and 670.7 person-years at risk in the placebo and treatment arms, respectively. Using the direct $\delta$-method, we derive the estimate for $\widehat{IR}_1/\widehat{IR}_2 = 0.977$ ($P=0.97$ $95\%CI$ $[0.263, 3.63]$).

The test statistic agrees with the findings of \citet{zeng_efficacy_2015}; they have found that VE is $72\%\;(95\%CI$ $[42.4\%, 87.6\%])$ at month 8 and $71.8\%\;(95\%CI$ $[48.2\%, 85.6\%])$ at month 12. Even though we argued that the naive VE should not be used for assessing vaccine waning due to the selection-bias introduced, their analysis of the incidence also points towards no waning.

In practice, arguments about waning are often based on the decay of immunological responses, measured by, for example, antibody levels. \citet{zeng_efficacy_2015} measured the serum-specific antibodies in the vaccine and the placebo group. While at baseline the concentrations were similar in the two treatment groups, the vaccination elicited a strong immune response. The concentration of the serum-specific antibodies did not decrease significantly over time, agreeing with our test results.

\section{Discussion}
The challenge effect quantifies changes in the protective effect of the vaccine. However, it is not possible to point identify the challenge effect from conventional RCTs, and the sharp bounds of partial identification can often be too wide to make claims about waning.

We show that instead of identifying the challenge effect, a sharp null hypothesis of no waning can be assessed accurately, with minimal assumptions; that is, unless some unrealistic cancellations of waning hold, we can interpret the null hypothesis as equivalent to the population level notion of no waning, $VE_1^{\mathrm{challenge}}\neq VE_2^{\mathrm{challenge}}$. We derive a simple test statistic for the null hypothesis, which can be computed using the observed data from a standard RCT. In the absence of patient-level data, the test statistics can be computed using summary data only. 

The test is designed to assess the existence of waning. When waning is confirmed by the test, further steps can be taken to try to determine the magnitude of waning. This could, e.g., be a follow-up trial, a  reanalysis of the existing data with methods building on the results from \citet{janvin_quantification_2024}, or additional analyses leveraging expert knowledge on exposure probabilities, as discussed in Appendix \ref{APP_sub: Challenge sens}.

When evidence of waning remains inconclusive, this may reflect the limited duration of follow-up. Such limitations are common in vaccine trials, where the blinded phase is often discontinued once early efficacy has been established. A next step is to extend the proposed testing strategy to incorporate information from, for example, post-unblinding and open-label follow-up period \citep{fintzi_assessing_2021}, or to adapt the strategy to modified trial designs, for example, inspired by \citet{follmann_deferred-vaccination_2021}. These extensions can allow us to relax some assumptions and improve power by leveraging additional cases.

\bibliographystyle{unsrtnat}
\bibliography{references}

\section*{Online Appendix}
\addcontentsline{toc}{section}{Appendices}
\renewcommand{\thesubsection}{\Alph{subsection}}

\subsection{Principal Strata at Different Times}\label{APP: PS general}
Suppose that we are conducting a 4-arm trial, as described in Section \ref{SEC: Intro} and depicted in Figure \ref{FIG: 4-arms} of the main manuscript. Consider a counterfactual causal model such that there exists a pre-treatment variable, denoted by $T \in \{1, \dots, 16\}$, such that
    \begin{align*}
        T = 1 &\implies (\Delta Y_1^{a=1, e_1=1}=0, \Delta Y_1^{a=0, e_1=1}=0, \Delta Y_2^{a=1, e_1=0, e_2=1}=0, \Delta Y_2^{a=0, e_1=0, e_2=1}=0)\\
        T = 2 &\implies (\Delta Y_1^{a=1, e_1=1}=0, \Delta Y_1^{a=0, e_1=1}=0, \Delta Y_2^{a=1, e_1=0, e_2=1}=0, \Delta Y_2^{a=0, e_1=0, e_2=1}=1)\\
        T = 3 &\implies (\Delta Y_1^{a=1, e_1=1}=0, \Delta Y_1^{a=0, e_1=1}=0, \Delta Y_2^{a=1, e_1=0, e_2=1}=1, \Delta Y_2^{a=0, e_1=0, e_2=1}=0)\\
        T = 4 &\implies (\Delta Y_1^{a=1, e_1=1}=0, \Delta Y_1^{a=0, e_1=1}=0, \Delta Y_2^{a=1, e_1=0, e_2=1}=1, \Delta Y_2^{a=0, e_1=0, e_2=1}=1)\\
        T = 5 &\implies (\Delta Y_1^{a=1, e_1=1}=0, \Delta Y_1^{a=0, e_1=1}=1, \Delta Y_2^{a=1, e_1=0, e_2=1}=0, \Delta Y_2^{a=0, e_1=0, e_2=1}=0)\\
        T = 6 &\implies (\Delta Y_1^{a=1, e_1=1}=0, \Delta Y_1^{a=0, e_1=1}=1, \Delta Y_2^{a=1, e_1=0, e_2=1}=0, \Delta Y_2^{a=0, e_1=0, e_2=1}=1)\\
        T = 7 &\implies (\Delta Y_1^{a=1, e_1=1}=0, \Delta Y_1^{a=0, e_1=1}=1, \Delta Y_2^{a=1, e_1=0, e_2=1}=1, \Delta Y_2^{a=0, e_1=0, e_2=1}=0)\\
        T = 8 &\implies (\Delta Y_1^{a=1, e_1=1}=0, \Delta Y_1^{a=0, e_1=1}=1, \Delta Y_2^{a=1, e_1=0, e_2=1}=1, \Delta Y_2^{a=0, e_1=0, e_2=1}=1)\\
        T = 9 &\implies (\Delta Y_1^{a=1, e_1=1}=1, \Delta Y_1^{a=0, e_1=1}=0, \Delta Y_2^{a=1, e_1=0, e_2=1}=0, \Delta Y_2^{a=0, e_1=0, e_2=1}=0)\\
        T = 10 &\implies (\Delta Y_1^{a=1, e_1=1}=1, \Delta Y_1^{a=0, e_1=1}=0, \Delta Y_2^{a=1, e_1=0, e_2=1}=0, \Delta Y_2^{a=0, e_1=0, e_2=1}=1)\\
        T = 11&\implies (\Delta Y_1^{a=1, e_1=1}=1, \Delta Y_1^{a=0, e_1=1}=0, \Delta Y_2^{a=1, e_1=0, e_2=1}=1, \Delta Y_2^{a=0, e_1=0, e_2=1}=0)\\
        T = 12 &\implies (\Delta Y_1^{a=1, e_1=1}=1, \Delta Y_1^{a=0, e_1=1}=0, \Delta Y_2^{a=1, e_1=0, e_2=1}=1, \Delta Y_2^{a=0, e_1=0, e_2=1}=1)\\
        T = 13 &\implies (\Delta Y_1^{a=1, e_1=1}=1, \Delta Y_1^{a=0, e_1=1}=1, \Delta Y_2^{a=1, e_1=0, e_2=1}=0, \Delta Y_2^{a=0, e_1=0, e_2=1}=0)\\
        T = 14 &\implies (\Delta Y_1^{a=1, e_1=1}=1, \Delta Y_1^{a=0, e_1=1}=1, \Delta Y_2^{a=1, e_1=0, e_2=1}=0, \Delta Y_2^{a=0, e_1=0, e_2=1}=1)\\
        T = 15 &\implies (\Delta Y_1^{a=1, e_1=1}=1, \Delta Y_1^{a=0, e_1=1}=1, \Delta Y_2^{a=1, e_1=0, e_2=1}=1, \Delta Y_2^{a=0, e_1=0, e_2=1}=0)\\
        T = 16 &\implies (\Delta Y_1^{a=1, e_1=1}=1, \Delta Y_1^{a=0, e_1=1}=1, \Delta Y_2^{a=1, e_1=0, e_2=1}=1, \Delta Y_2^{a=0, e_1=0, e_2=1}=1)
    \end{align*}

The counterfactual groups above are a composite of the 4 conventional deterministic principal strata defined at each point in time. For example, someone may be doomed at time 1, while immune at time 2, which corresponds to $T=4$. The transition from one principal stratum to another corresponds to a waning at the individual level, as one's response to exposure changes, at least under one of the two treatment levels.

In this general setting, the challenge effects are 
\begin{align*}
    VE_1^{\mathrm{challenge}} &=  \frac{P\big(T \in \{9, \dots, 16\}\big)}{P\big(T \in \{5, \dots, 8,13, \dots, 16\}\big)}\\
    VE_2^{\mathrm{challenge}}&= \frac{P\big(T \in \{3,4, 7,8, 11, 12, 15,16\}\big)}{P\big(T \in \{2,4,6,8,10,12,14,16\}\big)},
\end{align*}

which are equal to the observed incidence ratios under a suitable set of assumptions, as shown in Appendix \ref{APP: Prop proof}.

\begin{corollary}[Principal Strata under the null]
    Suppose Hypothesis \ref{HYP: formal null} holds. Then
    $$P(T \in \{1, 6, 11, 16\}) = 1$$
\end{corollary}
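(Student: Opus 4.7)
The plan is to observe that Hypothesis \ref{HYP: formal null} imposes exactly two equality constraints on the four-tuple of potential outcomes that defines the stratum $T$, and then to read off from the definition which of the $16$ strata are consistent with both constraints.

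First I would fix the shorthand
$$
(A, B, C, D) \coloneqq \bigl(\Delta Y_1^{a=1, e_1=1},\; \Delta Y_1^{a=0, e_1=1},\; \Delta Y_2^{a=1, e_1=0, e_2=1},\; \Delta Y_2^{a=0, e_1=0, e_2=1}\bigr),
$$
so that each value $T = 1, \dots, 16$ in the display preceding the corollary corresponds to one of the $2^4$ possible binary tuples $(A,B,C,D)$. Next I would rewrite Hypothesis \ref{HYP: formal null}, evaluated at $a=1$ and $a=0$ separately, as the two almost-sure equalities $A = C$ and $B = D$. Thus, up to a $\mathbb P$-null event, the stratum variable $T$ takes values only in the set
$$
S \coloneqq \bigl\{ (A,B,C,D) \in \{0,1\}^4 \;:\; A = C \text{ and } B = D \bigr\}.
$$

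Then I would enumerate $S$: the constraint $A=C$ and $B=D$ forces $(A,B,C,D)$ to be one of $(0,0,0,0), (0,1,0,1), (1,0,1,0), (1,1,1,1)$, and by checking these against the list in the display, these correspond exactly to $T = 1, 6, 11, 16$. Conversely, any $T$ not in $\{1,6,11,16\}$ violates at least one of the two equalities, hence occurs with probability zero under Hypothesis \ref{HYP: formal null}. Taking complements gives $P(T \in \{1,6,11,16\}) = 1$.

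There is no real obstacle here; the argument is essentially a verification that the sharp null statement and the definition of the principal strata are bookkeeping-compatible. The only thing worth being careful about is that Hypothesis \ref{HYP: formal null} is an almost-sure statement about two separate events (one for each $a$), so the conclusion is also almost sure rather than pointwise, and the $16$ strata are taken as a deterministic partition of a suitable sample space of counterfactual tuples as set up at the start of Appendix \ref{APP: PS general}.
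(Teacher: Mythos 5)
Your proposal is correct and is essentially the paper's argument: the paper establishes the same conclusion by showing that each ``mismatch'' event (e.g.\ $\Delta Y_1^{a=1,e_1=1}=0,\ \Delta Y_2^{a=1,e_1=0,e_2=1}=1$) has probability zero and partitioning it into the corresponding strata, which is just the complementary bookkeeping to your direct enumeration of the tuples satisfying $A=C$ and $B=D$. Your mapping of the consistent tuples to $T\in\{1,6,11,16\}$ matches the paper's enumeration, so no gap remains.
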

\begin{proof}
Under the null hypothesis, and by partitioning
    \begin{align*}
        0&\overset{H_0}{=}P(\Delta Y_1^{a=1, e_1=1}=0, \Delta Y_2^{a=1, e_1=0, e_2=1}=1)\\
        &= P(\Delta Y_1^{a=1, e_1=1}=0, \Delta Y_1^{a=0, e_1=1}=0, \Delta Y_2^{a=1, e_1=0, e_2=1}=1, \Delta Y_2^{a=0, e_1=0, e_2=1}=0)\\
        &+P(\Delta Y_1^{a=1, e_1=1}=0, \Delta Y_1^{a=0, e_1=1}=0, \Delta Y_2^{a=1, e_1=0, e_2=1}=1, \Delta Y_2^{a=0, e_1=0, e_2=1}=1)\\
        &+P(\Delta Y_1^{a=1, e_1=1}=0, \Delta Y_1^{a=0, e_1=1}=1, \Delta Y_2^{a=1, e_1=0, e_2=1}=1, \Delta Y_2^{a=0, e_1=0, e_2=1}=0)\\
        &+P(\Delta Y_1^{a=1, e_1=1}=0, \Delta Y_1^{a=0, e_1=1}=1, \Delta Y_2^{a=1, e_1=0, e_2=1}=1, \Delta Y_2^{a=0, e_1=0, e_2=1}=1).
    \end{align*}
    Therefore, by the axioms of probability $P(T=3)=P(T=4)=P(T=7)=P(T=8)=0$. Similarly, under the null 
    \begin{align*}
        &P(\Delta Y_1^{a=1, e_1=1}=1, \Delta Y_2^{a=0, e_1=0, e_2=1}=0)=P(\Delta Y_1^{a=0, e_1=1}=0, \Delta Y_2^{a=0, e_1=0, e_2=1}=1)\\
        &=P(\Delta Y_1^{a=0, e_1=1}=1, \Delta Y_2^{a=0, e_1=0, e_2=1}=0)=0,
    \end{align*}
    which implies that
    \begin{align*}
        &P(T=9)=P(T=10)=P(T=13)=P(T=14)=0\\
        &P(T=2)=P(T=4)=P(T=10)=P(T=12)=0\\
        &P(T=5)=P(T=7)=P(T=13)=P(T=15)=0.
    \end{align*}
    Thus, we have that $P(T \in \{2,3,4,5,7,8,9,10,12,13,14,15\})=0$, which means that $P(T\in \{1,6,11,16\})=1$.
\end{proof}

It follows that, under the null hypothesis, the challenge effects are equal to 
\begin{align*}
    VE_1^{\mathrm{challenge}} &=  \frac{P\big(T \in \{9, \dots, 16\}\big)}{P\big(T \in \{5, \dots, 8,13, \dots, 16\}\big)}\overset{H_0}{=}\frac{P\big(T \in \{11, 16\}\big)}{P\big(T \in \{6,16\}\big)} \\
    VE_2^{\mathrm{challenge}}&= \frac{P\big(T \in \{3,4, 7,8, 11, 12, 15,16\}\big)}{P\big(T \in \{2,4,6,8,10,12,14,16\}\big)} \overset{H_0}{=}\frac{P\big(T \in \{11, 16\}\big)}{P\big(T \in \{6,16\}\big)},
\end{align*}
which shows their equality under the null.

We argue for the equivalence of the population level and the sharp no waning, under the sharp no waning of the placebo, that is, $\Delta Y_1^{a=0, e_1=1}= \Delta Y_2^{a=0, e_1=0, e_2=1}$ with probability $1$. We believe this is a plausible assumption, as the effect of the placebo should not change over time for an individual if they are isolated; hence, their immune system is not challenged before the exposure. Therefore, if they were to develop the outcome upon exposure at time 1, then they would develop it at time 2, given that the exposures are identical. Similarly, if they were to remain event-free at time 1 due to natural immunity, for example, due to genetic reasons, a quality that would not change over time and they would remain event-free at time 2 as well.

By identical arguments as in the proof above, the sharp no waning of the placebo implies that
$$
P(T \in \{2,4,5,7,10,12,13,15\}) =0.
$$
In this setting, the challenge effects are
\begin{align*}
    VE_1^{\mathrm{challenge}} &=  \frac{P\big(T \in \{9, \dots, 16\}\big)}{P\big(T \in \{5, \dots, 8,13, \dots, 16\}\big)}=\frac{P\big(T \in \{9, 11, 14, 16\}\big)}{P\big(T \in \{6, 8, 14, 16\}\big)} \\
    VE_2^{\mathrm{challenge}}&= \frac{P\big(T \in \{3,4, 7,8, 11, 12, 15,16\}\big)}{P\big(T \in \{2,4,6,8,10,12,14,16\}\big)} =\frac{P\big(T \in \{3,8,11,16\}\big)}{P\big(T \in \{6,8, 14,16\}\big)}.
\end{align*}
Therefore, $VE_1^{\mathrm{challenge}}=VE_2^{\mathrm{challenge}}$, when the null hypothesis does not hold, if and only if $\big(T \in \{9, 11, 14, 16\}\big)=\big(T \in \{3,8,11, 16\}\big)$, that is equivalent to 
$$
    P(T=9)+P(T=14)=P(T=3)+P(T=8).
$$

Here, $T=9$ corresponds to individuals who are harmed by vaccination in the first interval but who would have been immune at time 2 had they instead been isolated in prior to that. In contrast, $T=14$ denotes individuals who are doomed in the first interval but who would have been helped in the second interval under following isolation. Thus, $T=9 \cup T=14$ comprises those whose potential outcome under vaccination improves from time 1 to time 2.

Similarly, $T=3 \cup T=8$ comprises individuals who transition from immune to harmed or from helped to doomed, respectively; that is, those whose potential outcome under vaccination worsens from time 1 to time 2. Informally, the equality above implies that, in the vaccinated arm, the number of individuals experiencing \textit{positive} waning is equal to the number experiencing \textit{negative} waning.

In general, it is unlikely that the vaccine effect improves over time, i.e., $P(T \in \{9, 14\}) \approx 0$. Even if such improvement were possible, an exact balance between vaccine improvement and deterioration would be even less plausible. Therefore, the null hypothesis of no individual-level waning is plausibly equivalent to the null hypothesis of no population-level waning.

\subsection{Proof of Proposition \ref{PROP: inv incidence}}\label{APP: Prop proof}
\begin{proof}
    The observed incidence rate ratio at time 1 is equal to 
\begin{align*}
    1-VE_1^{\mathrm{challenge}} &:= \frac{E[\Delta Y_1^{a=1, e_1=1}]}{E[\Delta Y_1^{a=0, e_1=1}]}\\
    &\overset{\mathrm{Assumption } \; \ref{ASS: exp exch}}{=}\frac{E[\Delta Y_1^{a=1, e_1=1}|E_1^{a=1}=1]}{E[\Delta Y_1^{a=0, e_1=1}|E_1^{a=0}=1]}\\
    &\overset{\mathrm{Assumption } \; \ref{ASS: Cons}}{=}\frac{E[\Delta Y_1^{a=1}|E_1^{a=1}=1]}{E[\Delta Y_1^{a=0}|E_1^{a=0}=1]}\\
    &\overset{\mathrm{Assumption } \; \ref{ASS: Exp nec}}{=}\frac{E[\Delta Y_1^{a=1}]/P(E_1^{a=1}=1)}{E[\Delta Y_1^{a=0}]/P(E_1^{a=0}=1)}\\
    &\overset{\mathrm{Assumption } \; \ref{ASS: blind exp}}{=}\frac{E[\Delta Y_1^{a=1}]}{E[\Delta Y_1^{a=0}]}\\
    &\overset{\mathrm{Assumption } \; \ref{ASS: Trt exch}, \, \ref{ASS: pos}}{=}\frac{E[\Delta Y_1^{a=1}|A=1]}{E[\Delta Y_1^{a=0}| A=0]}\\
    &\overset{\mathrm{Assumption }\; \ref{ASS: Cons}}{=}\frac{E[\Delta Y_1|A=1]}{E[\Delta Y_1| A=0]}
\end{align*}

At time 2, the incidence rate for the arm $A=a$ is
\begin{align*}
    E[\Delta Y_2|A=a]&\overset{\mathrm{Assumption} \; \ref{ASS: Cons}, \, \ref{ASS: Trt exch}, \, \ref{ASS: pos}} {=} E[\Delta Y_2^{a}] \\
    &=E[\Delta Y_2^{a}|E_1^{a}=0]P(E_1^{a}=0)\\
    &+E[\Delta Y_2^{a}| E_1^{a}=1]P(E_1^{a}=1).\\
\end{align*}
Considering the first term
\begin{align*}
    E[\Delta Y_2^{a}|E_1^{a}=0]P(E_1^{a}=0) &\overset{\mathrm{Assumption}\; \ref{ASS: Exp nec}}{=}  E[\Delta Y_2^{a}|E_1^{a}=0, E_2^{a}=1]P(E_2^{a}=1|E_1^{a}=0)P(E_1^{a}=0)\\
    & \overset{\mathrm{Assumption}\;\ref{ASS: Cons}}{=} E[\Delta Y_2^{a, e_1=0}|E_1^{a}=0, E_2^{a, e_1=0}=1]P(E_2^{a, e_1=0}=1|E_1^{a}=0)P(E_1^{a}=0)\\
    &\overset{\mathrm{Assumption}\; \ref{ASS: exp exch}}{=}  E[\Delta Y_2^{a, e_1=0}| E_2^{a, e_1=0}=1]P(E_2^{a, e_1=0}=1|E_1^{a}=0)P(E_1^{a}=0)\\
    &\overset{\mathrm{Assumption}\; \ref{ASS: Cons}}{=}  E[\Delta Y_2^{a, e_1=0, e_2=1}|E_2^{a, e_1=0}=1]P(E_2^{a, e_1=0}=1, E_1^{a}=0)\\
    &\overset{\mathrm{Assumption}\; \ref{ASS: exp exch}}{=}  E[\Delta Y_2^{a, e_1=0, e_2=1}]P(E_2^{a, e_1=0}=1, E_1^{a}=0).
\end{align*}

The second term equals
\begin{equation}\label{EQ: time1 exposed}
    \begin{aligned}
    E[\Delta Y_2^{a}|E_1^{a}=1]P(E_1^{a}=1) &=E[\Delta Y_2^{a}|E_1^{a}=1, \Delta Y_1^{a}=0]P(\Delta Y_1^{a}=0|E_1^{a}=1)P(E_1^{a}=1)\\
    &\overset{\mathrm{Assumption} \; \ref{ASS: Exp nec}}{=}E[\Delta Y_2^{a}|E_1^{a}=1, \Delta Y_1^{a}=0, E_2^{a}=1]\\
    &\times P(E_2^{a}=1|E_1^{a}=1, \Delta Y_1^{a}=0) P(\Delta Y_1^{a}=0|E_1^{a}=1)P(E_1^{a}=1)\\
    & \overset{\mathrm{Assumption}\;\ref{ASS: Cons}}{=} E[\Delta Y_2^{a, e_1=1}|E_1^{a}=1, \Delta Y_1^{a, e_1=1}=0, E_2^{a, e_1=1}=1]\\
    &\times P(E_2^{a, e_1=1}=1|E_1^{a}=1, \Delta Y_1^{a, e_1=1}=0) P(\Delta Y_1^{a, e_1=1}=0|E_1^{a}=1)P(E_1^{a}=1)
\end{aligned}
\end{equation}

Using Assumption \ref{ASS: Cons}
\begin{align}
    E&[\Delta Y_2^{a, e_1=1}|E_1^{a}=1, \Delta Y_1^{a, e_1=1}=0, E_2^{a, e_1=1}=1]\label{EQ: cross-trick}\\
    &=E[\Delta Y_2^{a, e_1=1, \delta y_1=0, e_2=1}|E_1^{a}=1, \Delta Y_1^{a, e_1=1}=0, E_2^{a, e_1=1, \delta y_1=0}=1].
\end{align}

As shown in Remark \ref{REM: Better null}, under Assumption \ref{ASS: exp eff restriction} and the null hypothesis, $\Delta Y_2^{a, e_1=1, \delta y_1=0, e_2=1}=\Delta Y_1^{a, e_1=1}$. Therefore
$$
    E[\Delta Y_2^{a, e_1=1, \delta y_1=0, e_2=1}|E_1^{a}=1, \Delta Y_1^{a, e_1=1}=0, E_2^{a, e_1=1, \delta y_1=0}=1] =0,
$$
which shows that $ E[\Delta Y_2^{a}|E_1^{a}=1]P(E_1^{a}=1) =0$. Finally, 
\begin{align*}
    \frac{E[\Delta Y_2|A=1]}{E[\Delta Y_2|A=1]} &=\frac{E[\Delta Y_2^{a=1}|E_1^{a=1}=0]P(E_1^{a=1}=0)}{E[\Delta Y_2^{a=0}|E_1^{a=0}=0]P(E_1^{a=0}=0)}\\
    &=\frac{E[\Delta Y_2^{a=1, e_1=0, e_2=1}]P(E_2^{a=1, e_1=0}=1, E_1^{a=1}=0)}{E[\Delta Y_2^{a=0, e_1=0, e_2=1}]P(E_2^{a=0, e_1=0}=1, E_1^{a=0}=0)}\\
    &\overset{\mathrm{Assumption} \; \ref{ASS: blind exp}}{=}\frac{E[\Delta Y_2^{a=1, e_1=0, e_2=1}]}{E[\Delta Y_2^{a=0, e_1=0, e_2=1}]}:=1- VE_2^{\mathrm{challenge}}
\end{align*}

Under the null hypothesis $E[\Delta Y_1^{a, e_1=1}]=E[\Delta Y_2^{a, e_1=0, e_2=1}]$, which implies $VE_1^{\mathrm{challenge}}=VE_2^{\mathrm{challenge}}$, and thus the equality $IR_1=IR_2$ holds.
\end{proof}

\subsection{Simulation}\label{APP: Simulation}
Consider the following data-generating mechanism
\begin{align*}
    A &:= \mathrm{Bernoulli}(0.5)\\
    T_1 &:=\mathrm{Categorical}(p_{doomed_1}, p_{helped_1}, p_{harmed_1}, p_{immune_1})\\
    E_1 &:= \mathrm{Bernoulli}(p_{E_1})\\
    \Delta Y_1 &:= E_1 \times \big(I(T_1=1)+(1-A)\times I(T_1=2)+A\times I(T_1=3)\big)\\
    E_2 &:= (1-\Delta Y_1) \times \mathrm{Bernoulli}(p_{E_2})\\
    \Delta Y_2 &:= E_2 \times \big(I(T_2=1)+(1-A)\times I(T_2=2)+A\times I(T_2=3)\big).
\end{align*}
Here, $T_1$ corresponds to the 4 standard principal strata, as discussed in Appendix \ref{APP: PS general}. While the proof and the assumptions of Proposition \ref{PROP: inv incidence} do not rely on the notion of the principal strata, to simulate the observed outcomes while respecting the constraints imposed on the challenge effects commonly involves assigning to all individuals their potential outcomes under all treatment levels and realizing their observed data by simulating the observed treatment.
Furthermore, $T_2$ is determined in a way to decrease the first interval challenge effect $VE_1^{\mathrm{challenge}}=1-\frac{p_{doomed_1}+p_{harmed_1}}{p_{doomed_1}+p_{helped_1}}$ such that we randomly resample a proportion of the helped individuals as doomed, to satisfy the assumption of sharp no waning of the placebo. The proportion is calculated, such that it is ensured in expectation that $\frac{p_{doomed_1}+p_{harmed_1}}{p_{doomed_1}+p_{helped_1}}\times w = \frac{p_{doomed_2}+p_{harmed_2}}{p_{doomed_2}+p_{helped_2}}$ for some factor $w>1$, (with $w=1$ corresponding to the sharp null hypothesis) (see Appendix \ref{APP_sub: wane spec} for details).
$T=1,2,3,4$ are defined to correspond to being \textit{doomed}, \textit{helped}, \textit{harmed}, and \textit{immune}, respectively. 

\begin{figure}[!ht]
    \centering
    \includegraphics[width=0.9\linewidth]{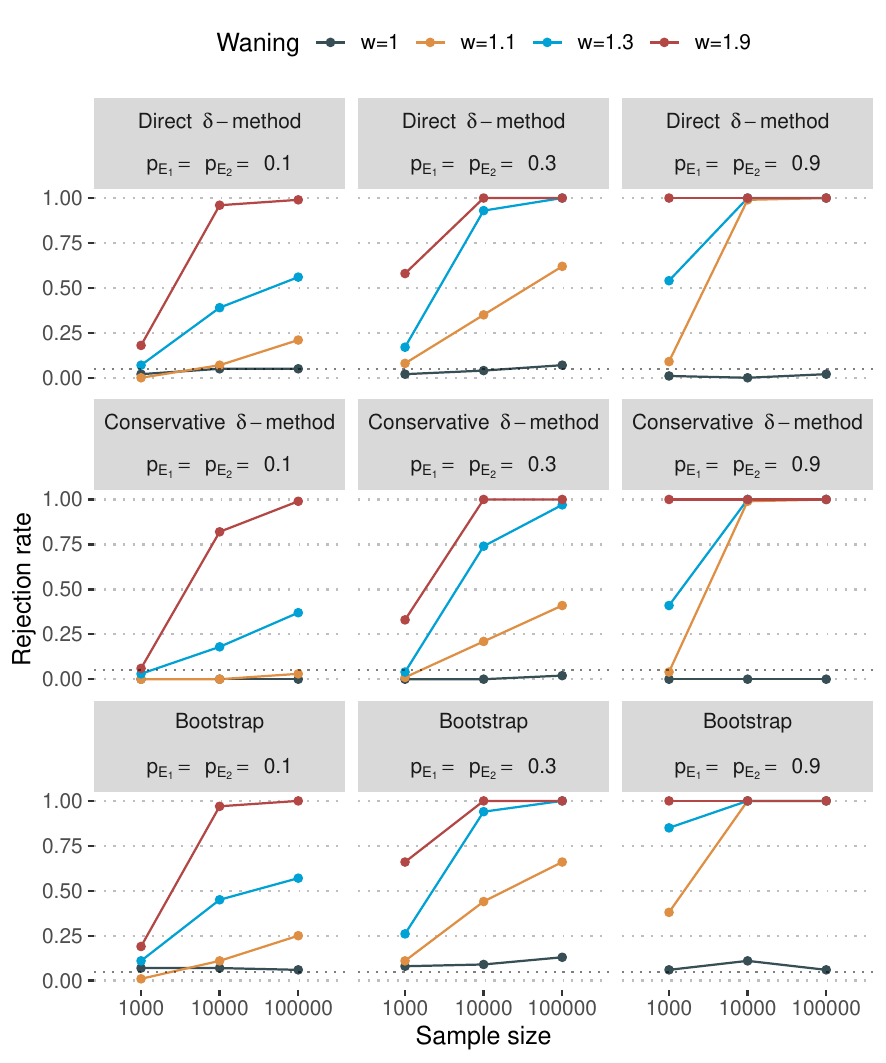}
    \caption{Rejection rate of the test statistics at level $\alpha =0.05$ from 100 simulations. The rows correspond to the direct $\delta$-method, the conservative $\delta$-method, and the non-parametric bootstrap, respectively.}
    \label{FIG: waning rejection rate}
\end{figure}

By Proposition \ref{PROP: inv incidence}, the two incidence ratios are equal when the sharp null hypothesis holds. Therefore, we can test whether $IR_1/IR_2 = 1$ to assess the null hypothesis. 

We propose three alternative methods to construct the confidence interval for the fraction of the incidence ratios. We can approximate the four empirical means, corresponding to the expected outcome at the first and the second time under treatment vs no treatment, respectively, as log-normal distributions. We can calculate the variance of both $IR_1$ and $IR_2$ using the $\delta$-method; however, the incidence ratios are negatively correlated, which depends on the level of exposure and the distribution of the susceptibility. To have a valid confidence interval, the correlation can be lower bounded by $-1$, but this is a considerable over-approximation when exposure is low or moderate, and when the population is heterogeneous in terms of susceptibility. To provide more precise confidence intervals, we can use the $\delta$-method to derive a direct estimator for the variance of $\mathrm{log}(\widehat{IR_1})-\mathrm{log}(\widehat{IR_2})$. Details of both are provided in Appendix \ref{APP: Variance}. Finally, given that we have access to patient-level data, we can derive the bounds using non-parametric bootstrap methods \citep{davison_bootstrap_1997}.

Figure \ref{FIG: waning rejection rate} illustrates the performance of the derived test statistics using the three alternatives. We set the baseline vaccine effect to $VE_1^{\mathrm{challenge}}=0.5$, by simulating $T_1$ from $\mathrm{Categorical}(0.2, 0.6, 0.2, 0)$. Under the varying level of exposure $E_1=E_2 \in \{0.1, 0.3, 0.9\}$, the level of overestimation of the variance is changing as illustrated in the first row. In particular, for relatively low exposure, $E_1=E_2=0.1$, the test is more conservative, and for low sample size, the null hypothesis is rejected at the level of the test, even when waning is substantial, that is, the vaccine effect decreases to $VE=0.05$. However, when the correlation approaches the supposed lower bound, by $E_1=E_2=0.9$, the null hypothesis is rejected at a rate of $0.54$, even for a moderate sample size of $n=1000$, and limited waning, $VE_1=0.5$, $VE_2=0.35$. 
In any of the simulated scenarios, when the null hypothesis holds, the test rejects at a rate below its level. 

Deriving the confidence interval using non-parametric bootstrap has greater power than using the direct $\delta$-method. However, the $\delta$-method maintains the nominal level, while the bootstrap shows deviations from it.

Extension to the time-to-event framework is trivial, except that the waning cannot be entirely continuous; the transition from one's reaction to exposure to a different reaction can be achieved in a discrete manner. Similar to the original paper of \citet{janvin_quantification_2024}, we can estimate the event rates using the cumulative hazard. Whether exposure is recurring or a single one does not matter, since under the null hypothesis, the outcome will be deterministically the same following each exposure. 

As discussed in Section \ref{SEC: Inference}, the equality of the hazard ratios is unsuitable for testing the waning of the vaccine effect, informally, due to the depletion of susceptibles over time. Using the same simulated data, we constructed confidence intervals for $\widehat{HR}_1/\widehat{HR}_2$ using non-parametric bootstrap, to test $HR_1/HR_2=1$.

\begin{figure}
    \centering
    \includegraphics[width=0.8\linewidth]{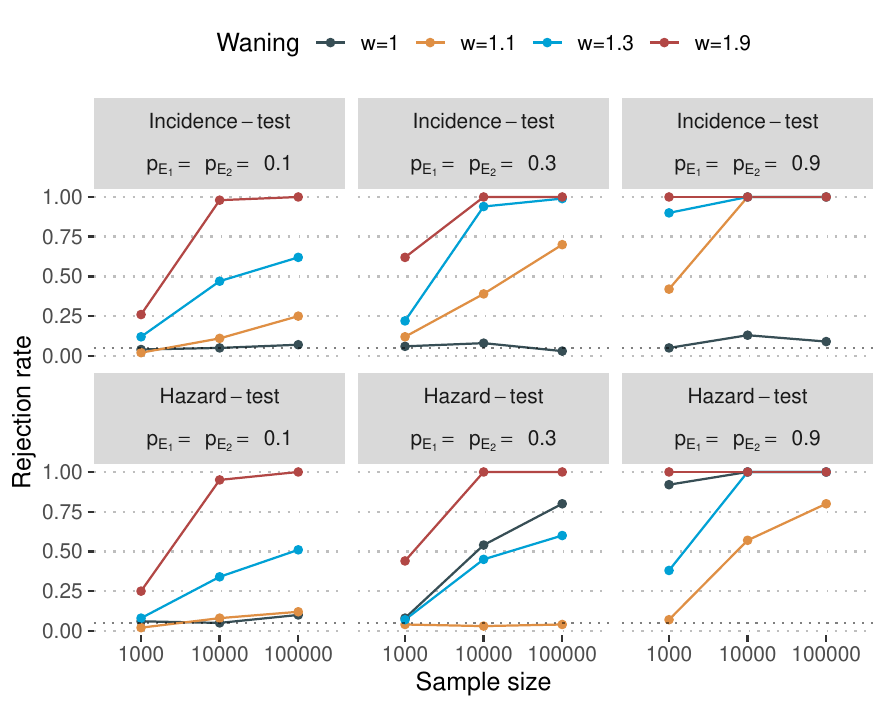}
    \caption{Rejection rates of the test statistics for $\widehat{HR}_1/\widehat{HR}_2$ at level $\alpha=0.05$, compared to $IR$ based test statistics, from $100$ simulations, using non-parametric bootstrap.}
    \label{FIG: HR test}
\end{figure}

Figure \ref{FIG: HR test} shows that when exposure is low, and hence the depletion of susceptibles is minor, the test based on hazard ratios is comparable to the test using incidence ratios, see Subfigure \ref{FIG: waning rejection rate} C. However, when the level of infectious pressure increases by exposure, when the null hypothesis holds ($w=0$), the test does not hold its nominal level. In addition, the power of the test does not increase with the level of the waning, which arguably is a desirable property of any sensible test of waning.

\subsection{Sensitivity analysis}\label{APP: Sens}
Consider the same data-generating mechanism as in the main part of Appendix \ref{APP: Simulation}, except for $T_1$ and $T_2$. As opposed to the simulation shown in the main part of the manuscript, suppose
$$
T_1 \sim \mathrm{Categorical}(0.2, 0.4, 0.1, 0.3).
$$

Under the assumption of sharp no waning of the placebo, we will consider three different scenarios for determining $T_2$, when the individual-level vaccine effect decreases:
\begin{itemize}
    \item Transitioning a proportion of helped-to-doomed ($w^{*}_{immune}=0$), which corresponds to the setting depicted in Figure \ref{FIG: waning rejection rate}, however, under a different distribution of $T_1$.
    \item Transitioning a proportion of immune-to-harmed ($w^{*}_{helped}=0$).
    \item Transitioning an equal proportion of helped and immune to doomed and harmed, respectively ($w^{*}_{helped}=w^{*}_{immune}$).
\end{itemize}
For details of determining the exact proportions $w^*$ to match the desired amount of waning $w$, see Appendix \ref{APP_sub: wane spec}.

The simulation results are shown in Figures \ref{FIG: App sens help}, \ref{FIG: App sens immune}, and \ref{FIG: App sens both}.

\begin{figure}[ht!]
    \centering
    \includegraphics[width=0.75\linewidth]{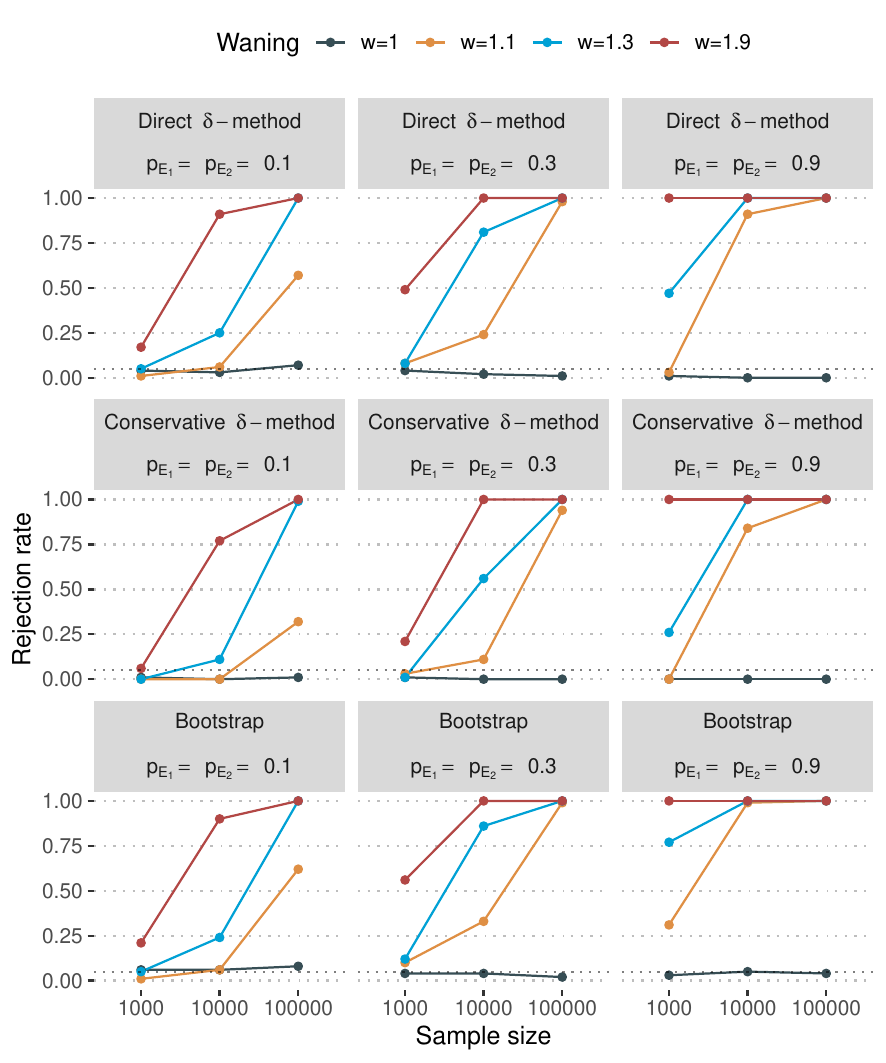}
    \caption{Rejection rates under helped-to-doomed transitions.}
    \label{FIG: App sens help}
\end{figure}

\begin{figure}[ht!]
    \centering
    \includegraphics[width=0.75\linewidth]{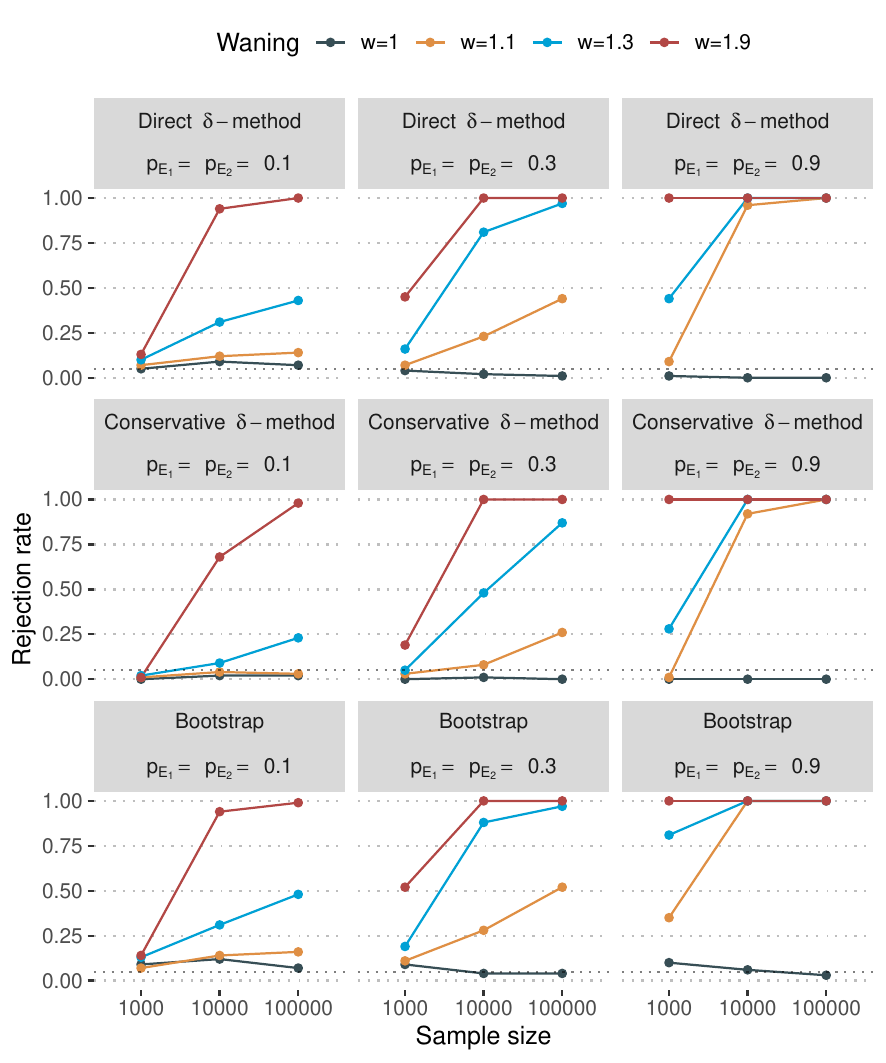}
    \caption{Rejection rates under immune-to-harmed transitions.}
    \label{FIG: App sens immune}
\end{figure}

\begin{figure}[ht!]
    \centering
    \includegraphics[width=0.75\linewidth]{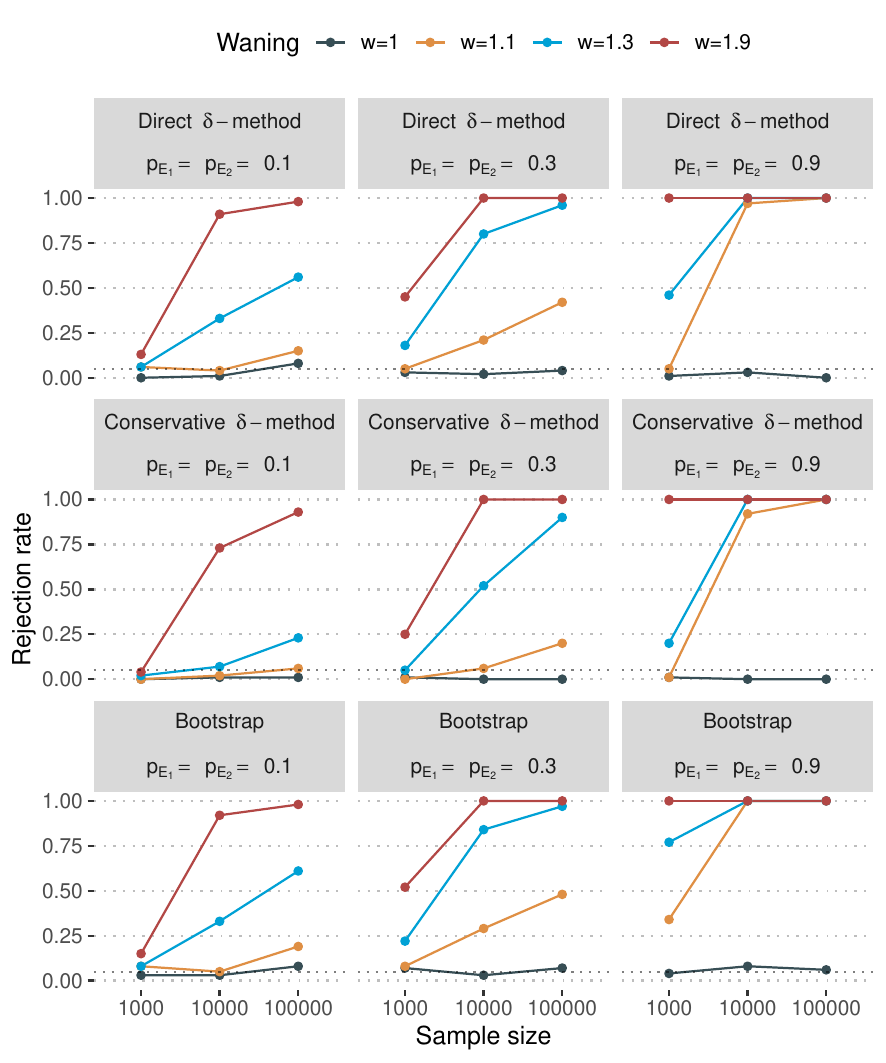}
    \caption{Rejection rates under equal helped-to-doomed and immune-to-harmed transitions.}
    \label{FIG: App sens both}
\end{figure}

When we include immune individuals in our cohort, by lowering the incidence rates of the outcomes, the variance of the estimator increases.

Comparing Figure \ref{FIG: App sens help} to \ref{FIG: App sens immune}, we can see that restricting the waning to helped-to-doomed,

Finally, even though we allow for unmeasured confounding between the exposures over time, for simplicity, we omitted that from the simulations. Consider the modified data-generating mechanism 
\begin{align*}
    A &:= \mathrm{Bernoulli}(0.5)\\
    U_E &:= \mathrm{Bernoulli}(0.5)\\
    T_1 &:=\mathrm{Categorical}(p_{doomed_1}, p_{helped_1}, p_{harmed_1}, p_{immune_1})\\
    E_1 &:= U_E\times \mathrm{Bernoulli}(p_{E_{high}})+(1-U_E)\times \mathrm{Bernoulli}(p_{E_{low}})\\
    \Delta Y_1 &:= E_1 \times \big(I(T_1=1)+(1-A)\times I(T_1=2)+A\times I(T_1=3)\big)\\
    E_2 &:= (1-\Delta Y_1) \times \big(U_E\times \mathrm{Bernoulli}(p_{E_{high}})+(1-U_E)\times \mathrm{Bernoulli}(p_{E_{low}})\big)\\
    \Delta Y_2 &:= E_2 \times \big(I(T_2=1)+(1-A)\times I(T_2=2)+A\times I(T_2=3)\big)
\end{align*}

The binary confounder assigns each individual to exposure-seeking or exposure-averse behaviors. 
However, our test is valid, even when such confounding exists, as shown in Figure \ref{FIG: App sens conf}. $T_1$ is distributed as in the main analysis, and waning happens only through helped-to-doomed transitions. We fix $p_{E_{low}}=0.1$ and vary $p_{E_{high}} \in \{0.1, 0.3, 0.9\}$.
\begin{figure}[ht!]
    \centering
    \includegraphics[width=0.75\linewidth]{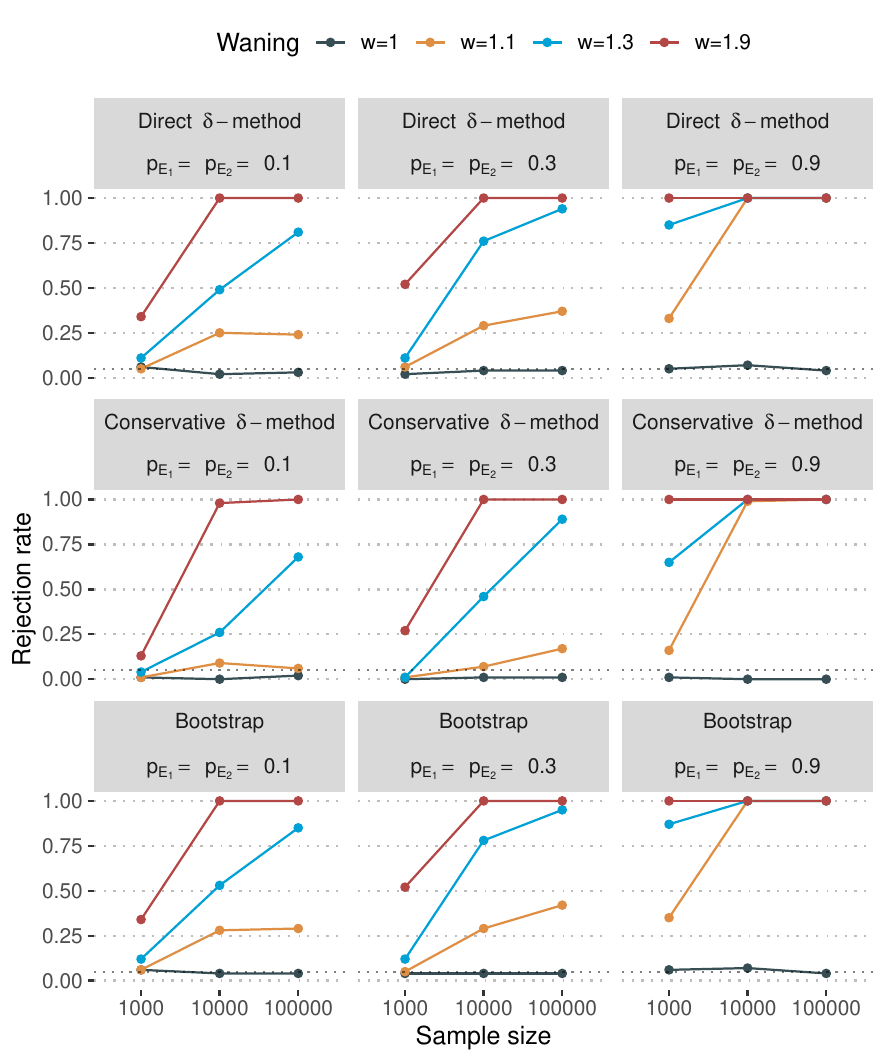}
    \caption{Rejection rates with confounded exposures.}
    \label{FIG: App sens conf}
\end{figure}

In all four modified scenarios, the power gain by using the direct $\delta$-method instead of the conservative approach is pronounced. The direct $\delta$-method has rejection rates approximately at the level of the test statistics, while in the case of the non-parametric bootstrap, larger deviations can happen. In general, we argue for the usage of the direct $\delta$-method, since it is feasible using summary data only, and has acceptable power, while maintaining the validity of the test.

\subsection{Determining transition rates}\label{APP_sub: wane spec}
In Appendix \ref{APP: Simulation} we simulate data, such that the distribution of the principal strata changes from time 1 to time 2. In particular, the ``challenge incidence ratio" should increase, that is, the incidence ratio should increase by a factor $w$ from time 1 to time 2 if we compare a population exposed at time 1 to a population that has been isolated first and then exposed at time 2. Consequently, the change in the principal strata, should be independent of all other variables.

We argued in Section \ref{SEC: Inference}, that the sharp null hypothesis is equivalent to the population level null hypothesis, under a set of conditions, one of which is sharp no waning of the placebo. That is, the individual level potential outcome will not change from time 1 to time 2 when intervening on treatment and exposure. Therefore, the only possible transitions are from ``doomed" to ``helped", ``helped" to ``doomed", ``harmed" to ``immune", and ``immune" to ``harmed". The waning of the vaccine, that is, the decrease in efficacy, corresponds to transitions from helped to doomed and immune to harmed (although, if for some, the vaccine efficacy increased, but for a larger proportion of the population it decreased, the vaccine still waned at a population level). For simplicity, in Appendix \ref{APP: Simulation} we only illustrated the results for the transition helped to doomed. Simulation results with alternative waning are available in Appendix \ref{APP: Sens}.

The incidence ratios in the challenge sense are
\begin{align*}
    &&IR_1^{\mathrm{challenge}}=\frac{p_{doomed_1}+p_{harmed_1}}{p_{doomed_1}+p_{helped_1}} &&IR_2^{\mathrm{challenge}}=\frac{p_{doomed_2}+p_{harmed_2}}{p_{doomed_2}+p_{helped_2}}.
\end{align*}
Suppose that the incidence rate increased by a factor of $w$ between the two time-points, due to the transitions helped to doomed and immune to harmed. In particular, suppose that a proportion $w^*_{helped}$ of the helped transitioned to doomed, a proportion $w^*_{immune}$ of the immune transitioned to harmed. Then the time 2 incidence ratio can be rewritten as 
\begin{align*}
    IR_2^{\mathrm{challenge}} &= w\times \frac{p_{doomed_1}+p_{harmed_1}}{p_{doomed_1}+p_{helped_1}}\\
    &= \frac{p_{doomed_1}+p_{harmed_1}+w^*_{helped} \times p_{helped_1}+w^*_{immune}\times p_{immune_1}}{p_{doomed_1}+p_{helped_1}}.
\end{align*}

Solving for $w^*_{helped}$, $w^*_{immune}$, we have 
\begin{equation}\label{EQ: transition rates}
        w^*_{helped}\times p_{helped_1}+w^*_{immune}\times p_{immune_1}= (w-1)\times (p_{doomed_1}+p_{harmed_1}).
\end{equation}

Thus, when the only transition happening is helped to doomed, 
$$
    w^*_{helped} = (w-1)\frac{p_{doomed_1}+p_{harmed_1}}{p_{helped_1}},
$$

and when the only transition is from immune to harmed,
$$
    w^*_{immune} = (w-1)\frac{p_{doomed_1}+p_{harmed_1}}{p_{immune_1}}.
$$

From Equation \eqref{EQ: transition rates} it is clear, that when $w=1$, that is, the sharp null hypothesis holds, there is no transition from either of the two possible principal strata.

\subsection{Interpretation of the IR, under the alternative}\label{APP: alternative}

Suppose that Hypothesis \ref{HYP: formal null} does not hold, that is, $\Delta Y_2^{a, e_1=0, e_2=1}$ and $\Delta Y_1^{a, e_1=1}$ can potentially differ. Under Assumptions \ref{ASS: Exp nec} and \ref{ASS: blind exp}, by identical arguments as in Appendix \ref{APP: Prop proof}, we can show that $IR_1 = 1-VE_1^{\mathrm{challenge}}$. Since $IR_1=HR_1$, our result agrees with the point identification claim of \citet{janvin_quantification_2024}.

By the same reasoning as in Appendix \ref{APP: Prop proof}, we can show that under Assumptions \ref{ASS: Exp nec}-\ref{ASS: exp exch}, \ref{ASS: Cons}-\ref{ASS: pos}
\begin{align*}
    E[\Delta Y_2|A=a] &=E[\Delta Y_2^{a}|E_1^{a}=0]P(E_1^{a}=0)\\
    &+E[\Delta Y_2^{a}| E_1^{a}=1]P(E_1^{a}=1),\\
\end{align*}
and that
$$
E[\Delta Y_2^{a}|E_1^{a}=0]P(E_1^{a}=0)= E[\Delta Y_2^{a, e_1=0, e_2=1}]P(E_2^{a, e_1=0}=1, E_1^{a}=0).
$$

However, under the alternative of the null hypothesis $E[\Delta Y_2^{a}|E_1^{a}=1]P(E_1^{a}=1)$ is not necessarily equal to $0$. In addition to the previously presented assumptions, suppose the following:

\begin{assumption}[Extended exposure exchangeability]\label{ASS: ext exp exch}
    \begin{align*}
        E_1^{a} &\independent \Delta Y_2^{a, e_1=1}|\Delta Y_1^{a, e_1=1}=0, E_2^{a, e_1=1}\\
        & \text{and} \qquad E_2^{a, e_1=1} \independent \Delta Y_2^{a, e_1=1, e_2}|\Delta Y_1^{a, e_1=1}=0\\
    \end{align*}
\end{assumption}
Assumption \ref{ASS: ext exp exch} generalizes Assumption \ref{ASS: exp exch} such that it holds under the intervention $e_1=1$. While previously, by Assumption \ref{ASS: Exp nec}, the intervention $e_1=0$ resulted in implicit conditioning on $\Delta Y_1^{a, e_1=0}=0$, since under $e_1=1$, the time 1 outcome is not deterministically 0, we have to condition on it explicitly. Both Assumptions \ref{ASS: exp exch} and \ref{ASS: ext exp exch} hold under the DAG proposed by \citet{janvin_quantification_2024}, reproduced in Figure \ref{FIG: DAG}.

Continuing with Equation \eqref{EQ: time1 exposed}
\begin{align*}
    &E[\Delta Y_2^{a, e_1=1}|E_1^{a}=1, \Delta Y_1^{a, e_1=1}=0, E_2^{a, e_1=1}=1]\\
    &\times P(E_2^{a, e_1=1}=1|E_1^{a}=1, \Delta Y_1^{a, e_1=1}=0) P(\Delta Y_1^{a, e_1=1}=0|E_1^{a}=1)P(E_1^{a}=1)\\
    &\overset{\mathrm{Assumption}\; \ref{ASS: ext exp exch}}{= }E[\Delta Y_2^{a, e_1=1}| \Delta Y_1^{a, e_1=1}=0, E_2^{a, e_1=1}=1]\\
    &\times P(E_2^{a, e_1=1}=1|E_1^{a}=1, \Delta Y_1^{a, e_1=1}=0) P(\Delta Y_1^{a, e_1=1}=0|E_1^{a}=1)P(E_1^{a}=1)\\
    &\overset{\mathrm{Assumption}\; \ref{ASS: Cons}}{= }E[\Delta Y_2^{a, e_1=1, e_2=1}| \Delta Y_1^{a, e_1=1}=0, E_2^{a, e_1=1}=1]\\
    &\times P(E_2^{a, e_1=1}=1|E_1^{a}=1, \Delta Y_1^{a, e_1=1}=0) P(\Delta Y_1^{a, e_1=1}=0|E_1^{a}=1)P(E_1^{a}=1)\\
    &\overset{\mathrm{Assumption}\; \ref{ASS: ext exp exch}}{= }E[\Delta Y_2^{a, e_1=1, e_2=1}|\Delta Y_1^{a, e_1=1}=0] P(E_2^{a, e_1=1}=1|E_1^{a}=1, \Delta Y_1^{a, e_1=1}=0)\\
     &\times P(\Delta Y_1^{a, e_1=1}=0|E_1^{a}=1)P(E_1^{a}=1)\\
     &\overset{\mathrm{Assumption}\; \ref{ASS: exp exch}}{= }E[\Delta Y_2^{a, e_1=1, e_2=1}|\Delta Y_1^{a, e_1=1}=0] P(\Delta Y_1^{a, e_1=1}=0)\\
     &\times P(E_2^{a, e_1=1}=1|E_1^{a}=1, \Delta Y_1^{a, e_1=1}=0) P(E_1^{a}=1)\\
     &\overset{\mathrm{Assumption}\; \ref{ASS: Cons}}{= }E[\Delta Y_2^{a, e_1=1, \delta y_1=0, e_2=1}|\Delta Y_1^{a, e_1=1}=0] P(\Delta Y_1^{a, e_1=1}=0)\\
     &\times P(E_2^{a, e_1=1}=1|E_1^{a}=1, \Delta Y_1^{a, e_1=1}=0) P(E_1^{a}=1)\\
     &\overset{\mathrm{Assumption}\; \ref{ASS: exp eff restriction}}{= }E[\Delta Y_2^{a, e_1=0, e_2=1}|\Delta Y_1^{a, e_1=1}=0] P(\Delta Y_1^{a, e_1=1}=0)\\
     &\times P(E_2^{a, e_1=1}=1|E_1^{a}=1, \Delta Y_1^{a, e_1=1}=0) P(E_1^{a}=1)\\
     &=P(\Delta Y_2^{a, e_1=0, e_2=1}=1, \Delta Y_1^{a, e_1=1}=0)\\
     &\times P(E_2^{a, e_1=1}=1|E_1^{a}=1, \Delta Y_1^{a, e_1=1}=0) P(E_1^{a}=1)
\end{align*}

Under the null hypothesis, the joint probability,
$$
P(\Delta Y_2^{a, e_1=0, e_2=1}=1, \Delta Y_1^{a, e_1=1}=0)
$$
is equal to $0$; however, under the alternative, not necessarily.

Therefore, 
\begin{equation}\label{EQ: Incidence time 2}
    \begin{aligned}
        E[\Delta Y_2|A=a]&= P(\Delta Y_2^{a, e_1=0, e_2=1}=1) P(E_2^{a, e_1=0}=1|E_1^{a}=0, \Delta Y_1^{a, e_1=0}=0) P(E_1^{a}=0)\\
        &+P(\Delta Y_2^{a, e_1=0, e_2=1}=1, \Delta Y_1^{a, e_1=1}=0) P(E_2^{a, e_1=1}=1|E_1^{a}=1, \Delta Y_1^{a, e_1=1}=0) P(E_1^{a}=1)
    \end{aligned}
\end{equation}
and 
$$
    IR_2 = \frac{P(\Delta Y_2^{a=1, e_1=0, e_2=1}=1)p^{2,a=1}_0p^1_0+P(\Delta Y_2^{a=1, e_1=0, e_2=1}=1, \Delta Y_1^{a=1, e_1=1}=0)p^{2,a=1}_1p^1_1}{P(\Delta Y_2^{a=0, e_1=0, e_2=1}=1)p^{2,a=0}_0p^1_0+P(\Delta Y_2^{a=0, e_1=0, e_2=1}=1, \Delta Y_1^{a=0, e_1=1}=0)p^{2,a=0}_1p^1_1},
$$
where $p^{2,a}_{e_1}= P(E_2^{a, e_1}=1|E_1^{a}=e_1, \Delta Y_1^{a, e_1}=0)$ and $p^1_{e_1}=P(E_1^{a}=e_1)\overset{\mathrm{Assumption}\; \ref{ASS: blind exp}}{=}P(E_1=e_1)$.

Depending on the exposure probabilities, $P(\Delta Y_2^{a, e_1=0, e_2=1}=1)$, and $P(\Delta Y_2^{a, e_1=0, e_2=1}=1, \Delta Y_1^{a, e_1=1}=0)$, $IR_2$ can be either below or above the $VE_2^{\mathrm{challenge}}$, thus $IR_2$ is uninformative about the extent of the waning. 

\begin{assumption}[Sharp no waning of the placebo]\label{ASS: sharp no a0 waning}
    $$
        \Delta Y_2^{a=0, e_1=0, e_2=1} = \Delta Y_1^{a, e_1=1} \quad \text{w.p } 1.
    $$
\end{assumption}

Assumption \ref{ASS: sharp no a0 waning} was used to establish equivalence between individual and population waning in Remark \ref{REM: waning and sharp}. Under the alternative and Assumption \ref{ASS: sharp no a0 waning}, we allow for $\Delta Y_2^{a=1, e_1=0, e_2=1} \neq \Delta Y_1^{a=1, e_1=1} $ with a non-zero probability.

By Assumption \ref{ASS: sharp no a0 waning}
\begin{align*}
    IR_2&= \frac{P(\Delta Y_2^{a=1, e_1=0, e_2=1}=1)p^{2,a=1}_0p^1_0+P(\Delta Y_2^{a=1, e_1=0, e_2=1}=1, \Delta Y_1^{a=1, e_1=1}=0)p^{2,a=1}_1p^1_1}{P(\Delta Y_2^{a=0, e_1=0, e_2=1}=1)p^{2,a=0}_0p^1_0}\\
    &=\frac{P(\Delta Y_2^{a=1, e_1=0, e_2=1}=1)p^{2,a=1}_0p^1_0}{P(\Delta Y_2^{a=0, e_1=0, e_2=1}=1)p^{2,a=0}_0p^1_0}+\frac{P(\Delta Y_2^{a=1, e_1=0, e_2=1}=1, \Delta Y_1^{a=1, e_1=1}=0)p^{2,a=1}_1p^1_1}{P(\Delta Y_2^{a=0, e_1=0, e_2=1}=1)p^{2,a=0}_0p^1_0}\\
    &=1-VE_2^{\mathrm{challenge}}+\frac{P(\Delta Y_2^{a=1, e_1=0, e_2=1}=1, \Delta Y_1^{a=1, e_1=1}=0)p^{2,a=1}_1p^1_1}{P(\Delta Y_2^{a=0, e_1=0, e_2=1}=1)p^{2,a=0}_0p^1_0}.
\end{align*}

In other words, under Assumption \ref{ASS: sharp no a0 waning} and the alternative, the second timepoint incidence ratio overestimates $1-VE_2^{\mathrm{challenge}}$. The size of the overestimation depends on the exposure probabilities and the size of the principal strata, which corresponds to individuals who transitioned from immune to harmed or from helped to doomed. Those who experienced the improvement of the vaccine over time (doomed to helped and harmed to immune) do not add to the bias, since even if they remained event-free in time 1, in the second interval they do not develop the outcome under treatment, that is the same outcome that they would have experienced, had they been isolated at time 1.

Due to the sign of the bias, when $IR_1<IR_2$, that is, based on the incidence ratios, the vaccine seemingly wanes, $VE_1^{\mathrm{challenge}}<VE_2^{\mathrm{challenge}}$ can still hold. The qualitative difference between the incidence ratio and the challenge effects, under Assumption \ref{ASS: sharp no a0 waning}, can occur if
\begin{align*}
    P(\Delta Y_2^{a=1, e_1=0, e_2=1}=0, \Delta Y_1^{a=1, e_1}=1) > P(\Delta Y_2^{a=1, e_1=0, e_2=1}=1, \Delta Y_1^{a=1, e_1}=0),
\end{align*}
that is, the probability of the effect of the individual vaccine improvement is greater than the probability of individual waning, and
\begin{align*}
    &P(\Delta Y_2^{a=1, e_1=0, e_2=1}=1, \Delta Y_1^{a=1, e_1}=0)(p_0^{2, a=1}p_0^1+p_1^{2, a=1}p_1^1)\\
    &> P(\Delta Y_2^{a=1, e_1=0, e_2=1}=0, \Delta Y_1^{a=1, e_1}=1)p_0^{2, a=1}p_0^1.
\end{align*}

Assuming independence between the exposures, given survival ($E_1^{a} \independent E_2^{a, e_1}|\Delta Y_1^{a, e_1}=0$, similar to Equation 7 of \citet{janvin_quantification_2024}), $p_1^{2, a}=p_0^{2,a}$ follows and the second condition simplifies to 
\begin{align*}
    &P(\Delta Y_2^{a=1, e_1=0, e_2=1}=1, \Delta Y_1^{a=1, e_1}=0) > P(\Delta Y_2^{a=1, e_1=0, e_2=1}=0, \Delta Y_1^{a=1, e_1}=1)p_0^1.
\end{align*}
Therefore, if the extent of waning is substantial compared to vaccine improvement and the infectious pressure at time 1 is high, waning could be mistakenly claimed based on the incidence ratios, even though the challenge effect has increased over time.

If $IR_2<IR_1$, then due to the positive bias of $IR_2$, $VE_2^{\mathrm{challenge}}>VE_1^{\mathrm{challenge}}$, that is, if the vaccine seemingly improved based on the incidence ratios, indeed the vaccine improved in the challenge effect sense. However, the size of this improvement is overestimated by the incidence ratios, with the magnitude of the overestimation depending on the probability of vaccine waning and the exposure probabilities.

\subsection{Sensitivity analysis on the challenge effects}\label{APP_sub: Challenge sens}

From Equation \eqref{EQ: Incidence time 2} it follows that 
$$
P(\Delta Y_2^{a, e_1=0, e_2=1}=1)p_0^{2, a}p_0^{1}=E[\Delta Y_2|A=a]- P(\Delta Y_2^{a, e_1=0, e_2=1}=1, \Delta Y_1^{a, e_1=1}=0)p_1^{2, a}p_1^{1}.
$$

Under Assumption \ref{ASS: blind exp} and \ref{ASS: Cons}, $p_0^{2, a=1}p_0^{1}=p_0^{2, a=0}p_0^{1}$, thus

\begin{equation}\label{EQ: challenge 2}
    1-VE_2^{challenge}=\frac{E[\Delta Y_2|A=1]- P(\Delta Y_2^{a=1, e_1=0, e_2=1}=1, \Delta Y_1^{a=1, e_1=1}=0)p_1^{2, a=1}p_1^{1}}{E[\Delta Y_2|A=0]- P(\Delta Y_2^{a=0, e_1=0, e_2=1}=1, \Delta Y_1^{a=0, e_1=1}=0)p_1^{2, a=0}p_1^{1}}.
\end{equation}

Using information from the observed data and Assumption \ref{ASS: sharp no a0 waning}, we can bound some of the expressions in Equation \eqref{EQ: challenge 2}. As
\begin{align*}
    E[Y_2|A=a]&=\big(P(\Delta Y_2^{a, e_1=0, e_2=1}=1, \Delta Y_1^{a, e_1=1}=0)+P(\Delta Y_2^{a, e_1=0, e_2=1}=1, \Delta Y_1^{a, e_1=1}=1)\big)p^{2}_0p^1_0\\
    &+P(\Delta Y_2^{a, e_1=0, e_2=1}=1, \Delta Y_1^{a, e_1=1}=0)p^{2}_1p^1_1,
\end{align*}
thus 
\begin{equation}\label{EQ: waning strata restriction}
    P(\Delta Y_2^{a, e_1=0, e_2=1}=1, \Delta Y_1^{a, e_1=1}=0) \leq \frac{E[\Delta Y_2|A=a]}{p^{2}_0p^1_0+p^{2}_1p^1_1}.
\end{equation}

Under Assumption \ref{ASS: sharp no a0 waning} $P(\Delta Y_2^{a=0, e_1=0, e_2=1}=1, \Delta Y_1^{a=0, e_1=1}=0)=0$, thus
\begin{align*}
    E[\Delta Y_1|A=0]&=P(\Delta Y_2^{a=0, e_1=0, e_2=1}=1, \Delta Y_1^{a=0, e_1=1}=1) p_1^1\\
&\text{and}\\
E[\Delta Y_2|A=0]&=P(\Delta Y_2^{a=0, e_1=0, e_2=1}=1, \Delta Y_1^{a=0, e_1=1}=1)p_0^2(1-p_1^1),
\end{align*}
implying that the equality 
\begin{equation}\label{EQ: exp prob restriction}
    \frac{E[\Delta Y_1|A=0]}{E[\Delta Y_2|A=0]} = \frac{p_1^1}{p_0^2(1-p_1^1)}
\end{equation}
must hold.

Since under Assumption \ref{ASS: sharp no a0 waning}
\begin{align*}
    VE_2^{\mathrm{challenge}}&=1-IR_2+\frac{P(\Delta Y_2^{a=1, e_1=0, e_2=1}=1, \Delta Y_1^{a=1, e_1=1}=0)p_1^2p_1^1}{E[\Delta Y_2|A=0]},
\end{align*}
$VE_2^{\mathrm{challenge}}$ is maximized when 
$P(\Delta Y_2^{a=1, e_1=0, e_2=1}=1, \Delta Y_1^{a=1, e_1=1}=0)p_1^2p_1^1- E[\Delta Y_2|A=1]$ reaches its maximum. By Equation \eqref{EQ: waning strata restriction}, the maximum is reached at $P(\Delta Y_2^{a=1, e_1=0, e_2=1}=1, \Delta Y_1^{a=1, e_1=1}=0) = \frac{E[\Delta Y_2|A=1]}{p^{2}_0p^1_0+p^{2}_1p^1_1}$, thus it is equal to 
$$
E[\Delta Y_2|A=1]\frac{p_1^2p_1^1-p^{2}_0p^1_0-p^{2}_1p^1_1}{p^{2}_0p^1_0+p^{2}_1p^1_1} = E[\Delta Y_2|A=1]\frac{-p^{2}_0p^1_0}{p^{2}_0p^1_0+p^{2}_1p^1_1}.
$$

By Equation \eqref{EQ: exp prob restriction}, 
\begin{align*}
    p_1^1 &= \frac{E[\Delta Y_1|A=0]p_0^2}{E[\Delta Y_2|A=0]+E[\Delta Y_1|A=0]p_0^2}
    &\text{and} && p_0^1 &= \frac{E[\Delta Y_2|A=0]}{E[\Delta Y_2|A=0]+E[\Delta Y_1|A=0]p_0^2},
\end{align*}
thus, the maximum simplifies to 
$$
E[\Delta Y_2|A=1]\frac{-p_0^2 E[\Delta Y_2|A=0]}{p_0^2 E[\Delta Y_2|A=0] + p_1^{2}E[\Delta Y_1|A=0] p_0^2} = E[\Delta Y_2|A=1]\frac{- E[\Delta Y_2|A=0]}{E[\Delta Y_2|A=0] + p_1^{2}E[\Delta Y_1|A=0]}.
$$

Since all expectations are non-negative, the fraction increases in $p_1^2$; therefore, it reaches its maximum at $p_1^2=1$. Then,
\begin{align*}
    VE_2^{\mathrm{challenge}} &\leq 1+ \frac{E[\Delta Y_2|A=1]}{E[\Delta Y_2|A=0]}\frac{- E[\Delta Y_2|A=0]}{E[\Delta Y_2|A=0] + p_1^{2}E[\Delta Y_1|A=0]} \Big |_{p_1^2=1} \\
    &=1- \frac{E[Y_2|A=1]}{E[Y_1|A=0]+E[Y_2|A=0]},
\end{align*}
which is the same upper bound on the challenge effect as derived by \citet{janvin_quantification_2024}. However, if data are available from a standard RCT and there is knowledge about the probability 
$P(E_2^{a, e_1=1}=1| E_1^{a}=1, \Delta Y_1^{a, e_1=1}=0)$, for example, from contact tracing studies, the upper bound can be considerably improved.

To illustrate the result, we took the observed quantities $E[Y_1|A=0], E[Y_1|A=1], E[Y_2|A=0]$, and $E[Y_2|A=1]$ from the BNT162b2 trial \cite{thomas_safety_2021}, and plotted the upper bound under the changing values of $p_1^2$ in Figure \ref{FIG: Upper bound}

\begin{figure}[!htbp]
    \centering
    \includegraphics[width=1\linewidth]{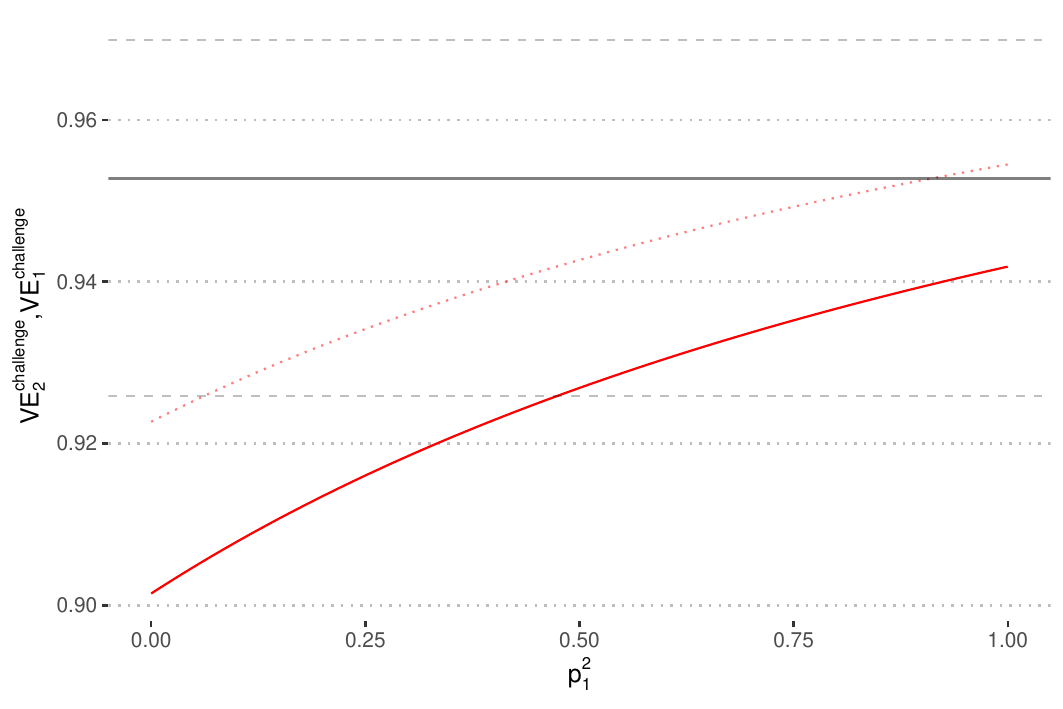}
    \caption{The changing value on the upper bound of $VE_2^{\mathrm{challenge}}$ (solid red line), for some $p_1^2$, with the corresponding one-sides $95\%$ confidence interval (dashed red). $VE_1^{\mathrm{challenge}}$ (solid black), is point-identified, regardless of the value of $p_1^2$ or other exposure probabilities. A dashed black line represents the corresponding $95\%$ two-sided confidence interval. The values at $p_1^2=1$ are equal to the findings of \citet{janvin_quantification_2024}.}
    \label{FIG: Upper bound}
\end{figure}

\subsection{On the Assumptions of \citet{janvin_quantification_2024}}\label{APP: on Ass-s}
Similar to \citet{janvin_quantification_2024}, because we consider data from a randomized controlled trial, we take the following three conventional assumptions for granted:
\begin{assumption}[Consistency] \label{ASS: Cons}
Suppose that the interventions on the treatment $A$, exposures $E_1$, $E_2$, and the first timepoint survival $\Delta Y_1=0$ are well-defined. That is
\begin{enumerate}[label = (\roman*)]
    \item if $A=a$ then $E_1=E_1^{a}, \Delta Y_1=\Delta Y_1^{a}, E_2= E_2^{a}, \Delta Y_2=\Delta Y_2^{a}$,
    \item if $E_1^{a}=e_1$ then $\Delta Y_1^{a}=\Delta Y_1^{a, e_1}, E_2^{a}=E_2^{a, e_1}, \Delta Y_2^{a}=\Delta Y_2^{a, e_1}$,
    \item if $\Delta Y_1^{a, e_1}=0$ then $E_2^{a, e_1}=E_2^{a, e_1, \delta y_1=0}, \Delta Y_2^{a, e_1}=\Delta Y_2^{a, e_1, \delta y_1=0}$,
    \item if $E_2^{a, e_1, \delta y_1=0}=e_2$ then $\Delta Y_2^{a, e_1, \delta y_1=0}=\Delta Y_2^{a, e_1, \delta y_1=0, e_2}$,
    \item if $E_2^{a, e_1=0}=e_2$ then $\Delta Y_2^{a, e_1=0}=\Delta Y_2^{a, e_1=0, e_2}$,
\end{enumerate}
    for all $a, e_1, e_2 \in \{0,1\}$.
\end{assumption}

\begin{assumption}[Treatment exchangeability]\label{ASS: Trt exch}
    $$
        E_1^{a}, \Delta Y_1^{a}, \Delta Y_1^{a, e_1}, E_2^{a, e_1}, E_2^{a, e_1 \delta y_1=0}, \Delta Y_2^{a, e_1,  \delta y_1=0, e_2} \independent A
    $$
    for all $a, e_1, e_2 \in \{0,1\}$.
\end{assumption}

\begin{assumption}[Positivity]\label{ASS: pos}
    $$P(A=a)>0 \; \text{ for all } a \in \{0,1\}.$$
\end{assumption}

Assumptions \ref{ASS: Cons}-\ref{ASS: pos} are close to the RCT assumptions set by \citet{janvin_quantification_2024}. The most important distinction is that we suppose it is possible to intervene on $\Delta Y_1$, with a well-defined intervention; thus, consistency holds if $\Delta Y_1=0$. Accordingly, we state the treatment exchangeability assumption for potential outcomes under the intervention $\delta y_1=0$.

Assumption \ref{ASS: Exp nec} states that in order to develop the outcome at a given timepoint, exposure at the same timepoint should have happened. The version stated in the main document is stronger than the one proposed by \citet{janvin_quantification_2024}: Assumption \ref{ASS: Exp nec} needs to hold even if there was no intervention at time 1 isolating the individuals. That is, there can be no causal effect of time 1 exposure at the time 2 outcome, that is not mediated through $\Delta Y_1$ or $E_2$, which is also a sufficient condition for Assumption \ref{ASS: exp eff restriction}.

While Assumption \ref{ASS: Trt exch} already imposed structural constraints on the relationship of the variables, we can take those for granted by the randomized design. Assumptions \ref{ASS: exp eff restriction}-\ref{ASS: exp exch} further restrict the dependencies between the variables. \citet{janvin_quantification_2024} argues for a plausible structure of the observed data, depicted in Figure \ref{FIG: DAG}.
The casual DAG satisfies Assumption \ref{ASS: exp eff restriction}-\ref{ASS: exp exch} and \ref{ASS: ext exp exch}.

\begin{figure}
    \centering
            \centering
        \begin{tikzpicture}
            \tikzset{line width=1.5pt, outer sep=0pt,
            ell/.style={draw,fill=white, inner sep=2pt,
            line width=1.5pt},
            swig vsplit={gap=5pt,
            inner line width right=0.5pt},
            swig hsplit={gap=5pt}
            };
                \node[name=UY, ell, shape=ellipse] at (4.5,6){$U_Y$};
                \node[name=DY1, ell, shape=ellipse] at (3,4){$\Delta Y_1$};
                \node[name=DY2, ell, shape=ellipse] at (6,4){$\Delta Y_2$};
                \node[name=A, ell, shape=ellipse] at (0,4){$A$};
                \node[name=E1, ell, shape=ellipse] at (3,2) {$E_1$};
                \node[name=E2, ell, shape=ellipse] at (6,2) {$E_2$};
                \node[name=UE, ell, shape=ellipse] at (4.5,0) {$U_E$};
            \begin{scope}[>={Stealth[black]},
              every node/.style={fill=white,circle},
              every edge/.style={draw=gray,very thick}]
                \path[->] (UY) edge (DY1);
                \path[->] (UY) edge (DY2);

                \path[->] (DY1) edge (DY2);
                \path[->] (DY1) edge (E2);

                \path[->] (A) edge (DY1);
                \path[->] (A) edge[bend right=30] (DY2);
                
                \path[->] (E1) edge (DY1);
                \path[->] (E2) edge (DY2);
                
                \path[->] (UE) edge (E1);
                \path[->] (UE) edge (E2);
                \end{scope}
        \end{tikzpicture}
        \caption{Directed acyclic graph illustrating the observed data structure.}
        \label{FIG: DAG}
\end{figure}
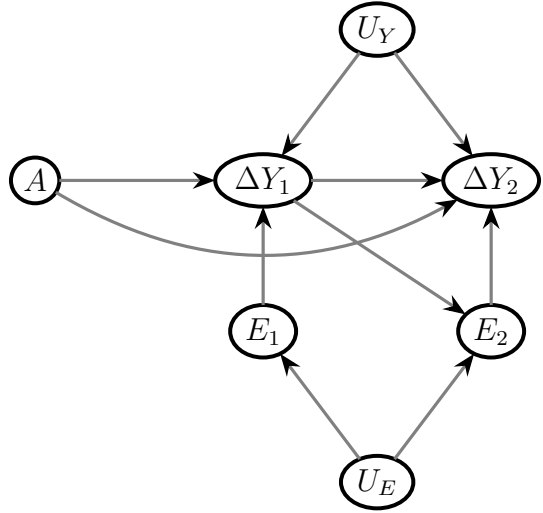

Assumption \ref{ASS: exp eff restriction} is not stated explicitly by \citet{janvin_quantification_2024}, but, the same argument is used for their Assumption 4: no causal path is expected to exist from $E_1$ to $\Delta Y_2$ that is not through $\Delta Y_1$ (we also allow mediation through $E_2$). However, if there exists such a path under an intervention $\delta y_1=0$, Assumption \ref{ASS: exp eff restriction} is violated, while Assumption 4 of \citet{janvin_quantification_2024} can still hold, even though this occurs in a contrived scenario.

Assumption \ref{ASS: exp eff restriction} facilitates our discussion and it is conceivable in many scenarios, for example antiretroviral post-exposure prophylaxis (PEP) treatment following exposure to HIV, or defining exposure intervention as spraying infectious particles of the respiratory disease towards the patient, however ensuring survival by an intervention instructing them to wear medical face shields. However, Assumption \ref{ASS: exp eff restriction} is not essential for our results to hold, and we can avoid using potential outcomes under interventions on survival.

\begin{assumption}\label{ASS: mon}
    $$
    E[\Delta Y_2^{a, e_1=1, e_2=1}|\Delta Y_1^{a, e_1=1}=0]=E[\Delta Y_2^{a, e_1=0, e_2=1}|\Delta Y_1^{a, e_1=1}=0]
    $$
\end{assumption}

Assumption \ref{ASS: mon} posits that, among individuals who survived time 1, despite exposure, the expected event rate at time 2 is the same following isolation as it is following exposure. Although Assumption \ref{ASS: mon} is cross-world, it is plausible when $E_1$ has no direct effect on $\Delta Y_2$. In that case the effect of $E_1$ is entirely mediated through time 1 survival. Accordingly in models such as NPSEM-IE-s \citep{pearl_causality_2009}, conditioning on survival makes the expected time 2 outcome invariant to interventions on $E_1$.

Under the null hypothesis, $E[\Delta Y_2^{a, e_1=0, e_2=1}|\Delta Y_1^{a, e_1=1}=0]=0$, thus under Assumption \ref{ASS: mon} $E[\Delta Y_2^{a, e_1=1, e_2=1}|\Delta Y_1^{a, e_1=1}=0]=0$, thus Equation \eqref{EQ: cross-trick} equals to 0, which shows that $E[\Delta Y_2^{a}|E_1^{a}=1]P(E_1^{a}=1)=0$, and proves our result.

Assumption \ref{ASS: blind exp} is equivalent to Assumption 3 of \citet{janvin_quantification_2024}.

Finally, under Assumption \ref{ASS: Trt exch}, Assumption \ref{ASS: exp exch} is equivalent to Assumption 4 of \citet{janvin_quantification_2024}. 

Therefore, to test the null hypothesis, Assumption \ref{ASS: Exp nec} and \ref{ASS: exp eff restriction} slightly restrict their counterparts from \citet{janvin_quantification_2024}. However, no other independencies are imposed that would not hold under the DAG proposed to represent a plausible data-generating mechanism. While all of the above discussed assumptions hold under the DAG in Figure \ref{FIG: DAG}, they would hold even if $E_1$ had a direct effect on $E_2$, thus in this sense, our conditions are more general than the ones imposed by \citet{janvin_quantification_2024}.

\subsection{Hazard-ratio based waning under the null hypothesis}\label{APP: Hazard}
Suppose that, similarly to Appendix \ref{APP: Simulation}, the variables are distributed according to the following data-generating mechanism:
\begin{align}
    A &\sim P_A\\
    T & \sim \mathrm{Categorical}\{p_{doomed}, p_{helped}, p_{harmed}, p_{immune}\}\\
    E_1 &\sim P_{E_1}\\
    \Delta Y_1 & = E_1 \times \big(I(T_1=1)+(1-A)\times I(T=2)+ A \times I(T=3) \big)\\
    E_2 &= (1-\Delta Y_1) \times P_{E_2} \label{EQ: Time 2 exp}\\
    \Delta Y_2 & = E_2 \times \big(I(T_1=1)+(1-A)\times I(T=2)+ A \times I(T=3) \big)
\end{align}
For simplicity, consider $P_A\sim \mathrm{Bernoulli}(0.5)$, and $P_{E_1} \sim \mathrm{Bernoulli}(p_{E_1})$, $P_{E_2}\sim \mathrm{Bernoulli}(p_{E_2})$. Since at randomization all individuals are event-free, it follows that 
$$
IR_1=HR_1\overset{H_0}{=}\frac{P\big(T \in \{11,16\} \big)}{P\big(T \in \{6,16\} \big)}=\frac{p_{harmed}+p_{doomed}}{p_{helped}+p_{doomed}},
$$
as shown in Appendices \ref{APP: PS general} and \ref{APP: Prop proof}.

However, at time 2, the discrete hazard can be rewritten as 
\begin{align*}
    E[\Delta Y_2|\Delta Y_1 =0, A=a] &= E[\Delta Y_2|\Delta Y_1=0, A=a, E_1=1]P(E_1=1|\Delta Y_1 =0, A=a)\\
    &+E[\Delta Y_2|\Delta Y_1=0, A=a, E_1=0]P(E_1=0|\Delta Y_1 =0, A=a)\\
    &\overset{\mathrm{Assumption}\; \ref{ASS: Cons}-\ref{ASS: pos}}{=} E[\Delta Y_2^{a}|\Delta Y_1^{a}=0, E_1^{a}=1]P(E_1^{a}=1|\Delta Y_1^{a} =0)\\
    &+E[\Delta Y_2^{a}|\Delta Y_1^{a}=0, E_1^{a}=0]P(E_1^{a}=0|\Delta Y_1^{a} =0)\\
    &\overset{\mathrm{Assumption}\; \ref{ASS: Exp nec}}{=}E[\Delta Y_2^{a}|\Delta Y_1^{a}=0, E_1^{a}=1, E_2^{a}=1]\\
    &\times P(E_2^{a}=1|\Delta Y_1^{a}=0, E_1^{a}=1) P(E_1^{a}=1|\Delta Y_1^{a} =0)\\
    &+E[\Delta Y_2^{a}|\Delta Y_1^{a}=0, E_1^{a}=0, E_2^{a}=1]\\
    &\times P(E_2^{a}=1|\Delta Y_1^{a}=0, E_1^{a}=0) P(E_1^{a}=0|\Delta Y_1^{a} =0)\\
    &\overset{\mathrm{Assumption}\; \ref{ASS: Cons}}{=}E[\Delta Y_2^{a, e_1=1, e_2=1}|\Delta Y_1^{a, e_1=1}=0, E_1^{a}=1, E_2^{a, e_1=1}=1]\\
    &\times P(E_2^{a}=1|\Delta Y_1^{a}=0, E_1^{a}=1) P(E_1^{a}=1|\Delta Y_1^{a} =0)\\
    &+E[\Delta Y_2^{a, e_1=0, e_2=1}|\Delta Y_1^{a, e_1=0}=0, E_1^{a}=0, E_2^{a, e_1=0}=1]\\
    &\times P(E_2^{a}=1|\Delta Y_1^{a}=0, E_1^{a}=0) P(E_1^{a}=0|\Delta Y_1^{a} =0)\\
    &\overset{\mathrm{Assumption}\; \ref{ASS: exp exch}, \, \ref{ASS: ext exp exch}}{=}E[\Delta Y_2^{a, e_1=1, e_2=1}|\Delta Y_1^{a, e_1=1}=0]\\
    &\times P(E_2^{a}=1|\Delta Y_1^{a}=0, E_1^{a}=1) P(E_1^{a}=1|\Delta Y_1^{a} =0)\\
    &+E[\Delta Y_2^{a, e_1=0, e_2=1}|\Delta Y_1^{a, e_1=0}=0]\\
    &\times P(E_2^{a}=1|\Delta Y_1^{a}=0, E_1^{a}=0) P(E_1^{a}=0|\Delta Y_1^{a} =0).
\end{align*}

Under Assumption \ref{ASS: exp eff restriction} and the null hypothesis, 
$$
    E[\Delta Y_2^{a, e_1=1, e_2=1}|\Delta Y_1^{a, e_1=1}=0]=0.
$$

The conditional probability of no exposure can be rewritten as 
\begin{align*}
    P(E_1^{a}=0|\Delta Y_1^{a}=0)&=\frac{P(\Delta Y_1^{a}=0|E_1^{a}=0)P(E_1^{a}=0)}{P(\Delta Y_1^{a}=0|E_1^{a}=1)P(E_1^{a}=1)+P(\Delta Y_1^{a}=0|E_1^{a}=0)P(E_1^{a}=0)}\\
    &\overset{\mathrm{Assumption}\; \ref{ASS: Exp nec}}{=}\frac{P(E_1^{a}=0)}{P(\Delta Y_1^{a}=0|E_1^{a}=1)P(E_1^{a}=1)+P(E_1^{a}=0)}\\
    &\overset{\mathrm{Assumption}\;\ref{ASS: Cons},\, \ref{ASS: exp exch}}{= }\frac{P(E_1^{a}=0)}{P(\Delta Y_1^{a, e_1=1}=0)P(E_1^{a}=1)+P(E_1^{a}=0)}.\\
\end{align*}

Similarly, under Assumption \ref{ASS: Exp nec}
$$
E[\Delta Y_2^{a, e_1=0, e_2=1}|\Delta Y_1^{a, e_1=0}=0] = P(\Delta Y_2^{a, e_1=0, e_2=1}=1, \Delta Y_1^{a, e_1=0}=0)=P(\Delta Y_2^{a, e_1=0, e_2=1}=1),
$$
where we used that $\Delta Y_1=1 \implies\Delta Y_2=0$

Therefore,
\begin{align*}
    E[\Delta Y_2|\Delta Y_1 =0, A=a] &=P(\Delta Y_2^{a, e_1=0, e_2=1}=1) P(E_2^{a}=1|\Delta Y_1^{a}=0, E_1^{a}=0)\\
    &\times  \frac{P(E_1^{a}=0)}{P(\Delta Y_1^{a, e_1=1}=0)P(E_1^{a}=1)+P(E_1^{a}=0)}.
\end{align*}

Given the data-generating mechanism, in particular the fact that there is no confounding between $E_1$ and $E_2$, we have $E_2 \independent (A, E_1)|\Delta Y_1=0$. Thus
\begin{align*}
    E[\Delta Y_2|\Delta Y_1=0, A=a]&=P(\Delta Y_2^{a, e_1=0, e_2=1}=1) P(E_2^{a}=1|\Delta Y_1^{a}=0)\\
    &\times \frac{P(E_1^{a}=0)}{P(\Delta Y_1^{a, e_1=1}=0)P(E_1^{a}=1)+P(E_1^{a}=0)}.
\end{align*}

Moreover, in the  proposed data-generating mechanism, 
$$
     P(E_2^{a=1}=1|\Delta Y_1^{a=1}=0)= P(E_2^{a=0}=1|\Delta Y_1^{a=0}=0),
$$
see Equation \eqref{EQ: Time 2 exp}. Thus, the hazard ratio at time 2, is equal to

\begin{align*}
    HR_2=\frac{P(\Delta Y_1^{a=1, e_1=1}=1)}{P(\Delta Y_1^{a=0, e_1=1}=1)}\frac{P(\Delta Y_1^{a=0, e_1=1}=0)P(E_1=1)+P(E_1=0)}{P(\Delta Y_1^{a=1, e_1=1}=0)P(E_1=1)+P(E_1=0)},
\end{align*}

where we used Assumption \ref{ASS: blind exp} to cancel out $P(E_1^{a}=1)$, and the fact that under the null hypothesis $P(\Delta Y_2^{a, e_1=0, e_2=1}=1)=P(\Delta Y_1^{a, e_1=1}=1)$.
Therefore, the hazard ratio does not only depend on the probability of the outcome given exposure, but also on the  probability of the time 1 exposure.
Even though the principal strata did not change over time, the time 2 hazard ratio is different from the time 1 hazard ratio by a factor of $\frac{P(\Delta Y_1^{a=0, e_1=1}=0)P(E_1=1)+P(E_1=0)}{P(\Delta Y_1^{a=1, e_1=1}=0)P(E_1=1)+P(E_1=0)}$. Thus, even if the first interval exposure increases, it is not guaranteed, that the ratio of the hazard ratios will increase or decrease monotonically, without knowledge of the distribution of the principal strata. See simulation results on testing the sharp null hypothesis of no waning (Hypothesis \ref{HYP: formal null}) using the equality of the hazard ratios in Appendix \ref{APP: Simulation} of the main document.

\subsection{Variance estimation}\label{APP: Variance}
To construct valid statistical tests for $\frac{\widehat{IR}_1}{\widehat{IR}_2}$, the limiting distribution of its estimator must be known. 

Denote with $\Delta Y_{i} \in \{0,1,2\}$ as the multivariate outcome, corresponding to the outcome of interest, not happening, happening at time 1, or time 2, respectively, and denote with $A_i \in \{0,1\}$ the treatment assignment of individual $i \in \{1, \dots, n\}$. Then, the estimators can be defined as
$$
\widehat{IR}_k :=\frac{\big(\sum_{i=1}^{n}I(A_i=1)I(\Delta Y_{i}=k)\big)/\sum_{i=1}^{n}I(A_i=1)}{\big(\sum_{i=1}^{n}I(A_i=0)I(\Delta Y_{i}=k)\big)/\sum_{i=1}^{n}I(A_i=0)}
$$

Using the log-normal approximation of the empirical means \citep{katz_obtaining_1978} the two incidence ratios can be approximated with 
\begin{align*}
    &&\mathrm{log}(\widehat{IR}_1) \sim \mathcal{N}(\mathrm{log}(IR_1), \sigma^2_{IR_1}) &&\mathrm{log}(\widehat{IR}_2) \sim \mathcal{N}(\mathrm{log}(IR_2), \sigma^2_{IR_2}),
\end{align*}
with estimated variance $\widehat{\sigma^2}_{IR_k}= \frac{1-\sum_{i=1}^n I(A_i=1)I(\Delta Y_{i}=k)/\sum_{i=1}^n I(A=1)}{\sum_{i=1}^n I(A_i=1)I(\Delta Y_{i}=k)}+\frac{1-\sum_{i=1}^n I(A_i=0)I(\Delta Y_{i}=k)/\sum_{i=1}^n I(A=0)}{\sum_{i=1}^n I(A_i=0)I(\Delta Y_{i}=k)},$
based on the $\delta$-method \citep{wasserman_introduction_2006}, and using the fact that the outcomes in the two arms are independent of each other (due to the assumption of no interference).

However, if we wish to derive the variance for $\widehat{IR}_1/\widehat{IR}_2$, or by the log-normal approximation, the variance of $\mathrm{log}(\widehat{IR}_1)-\mathrm{log}(\widehat{IR}_2)$, the outcomes of interest across time are no longer independent, thus the estimator
\begin{align*}
    &\widehat{\sigma^2}^*_{IR_1/IR_2}:= \widehat{\sigma^2}_{IR_1}+\widehat{\sigma^2}_{IR_2}\\
    &=\frac{1-\sum_{i=1}^n I(A_i=1)I(\Delta Y_{i}=1)/\sum_{i=1}^n I(A=1)}{\sum_{i=1}^n I(A_i=1)I(\Delta Y_{i}=1)}+\frac{1-\sum_{i=1}^n I(A_i=0)I(\Delta Y_{i}=1)/\sum_{i=1}^n I(A=0)}{\sum_{i=1}^n I(A_i=0)I(\Delta Y_{i}=1)}\\
    &+\frac{1-\sum_{i=1}^n I(A_i=1)I(\Delta Y_{i}=2)/\sum_{i=1}^n I(A=1)}{\sum_{i=1}^n I(A_i=1)I(\Delta Y_{i}=2)}+\frac{1-\sum_{i=1}^n I(A_i=0)I(\Delta Y_{i}=2)/\sum_{i=1}^n I(A=0)}{\sum_{i=1}^n I(A_i=0)I(\Delta Y_{i}=2)}
\end{align*}
for the variance of the difference of the log incidence ratios is biased.

Since the correlation between the incidence ratios is unknown, which we could derive only when we have access to the distribution of the principal strata and the distribution of the exposures over the two time-points, the estimated variance of the log-difference can be upper-bounded by
$$
    \widehat{\sigma^2}_{IR_1/IR_2}^{\mathrm{wide}} = \widehat{\sigma^2}_{IR_1}+\widehat{\sigma^2}_{IR_2}+2 \sqrt{\widehat{\sigma^2}_{IR_1}}\sqrt{\widehat{\sigma^2}_{IR_2}}.
$$

Alternatively, instead of applying the $\delta$-method separately to $IR_1$ and $IR_2$, we can directly apply it to the ratio $IR_1/IR_2$.

Denote the total number no events, events at time 1, events at time 2 with $\mathbf{M}^a=\big(M_0^a, M_1^a, M_2^a \big)$ under treatment $a$, and the total number of individuals randomized to treatment $a$ with $N^a$. Then $M^a$ has a multinomial distribution 
$$
    \mathbf{M}^a \sim \mathrm{Multinomial}\big(N^a, (p_0^a, p_1^a, p_2^a)\big),
$$
where $p_k^a$-s are determined by the distribution of the principal strata and the exposures.

The estimator of interest is
\begin{align*}
    \widehat{{IR_1}/{IR_2}}:=\left.\frac{\sum_{i=1}^nI(A_i=1)I(\Delta Y_i =1)}{\sum_{i=1}^nI(A_i=0)I(\Delta Y_i =1)} \middle/ \frac{\sum_{i=1}^nI(A_i=1)I(\Delta Y_i =2)}{\sum_{i=1}^nI(A_i=0)I(\Delta Y_i =2)} \  \right.
\end{align*}
Using the fact that $$\mathbf{M}^a =\Big(\sum_{i=1}^nI(A_i=a)I(\Delta Y_i =0), \sum_{i=1}^nI(A_i=a)I(\Delta Y_i =1), \sum_{i=1}^nI(A_i=a)I(\Delta Y_i =2)\Big)$$
and that two arms are uncorrelated, the mean and the covariance of the estimators of interest, 
\begin{align*}
    \hspace{-3em}&\mathbf{C}\\
    \hspace{-30em}&= \Big(\sum_{i=1}^nI(A_i=1)I(\Delta Y_i =1), \sum_{i=1}^nI(A_i=1)I(\Delta Y_i =2), \sum_{i=1}^nI(A_i=0)I(\Delta Y_i =1), \sum_{i=1}^nI(A_i=0)I(\Delta Y_i =2) \Big)^\top
\end{align*}
are
\begin{align*}
    &&\mathbf{\mu}:=E[\mathbf{C}] = \begin{pmatrix}
        N_1 p_1^1\\
        N_1 p_2^1\\
        N_0 p_1^0\\
        N_0 p_2^0
    \end{pmatrix}
    && \Sigma(\mathbf{C}) = \begin{pmatrix}
        N_1 p_1^1(1-p_1^1) & -N_1 p_1^1 p_2^1 &0 &0\\
        -N_1 p_1^1 p_2^1 & N_1 p_2^1(1-p_2^1) &0 &0\\
        0& 0&  N_0 p_1^0(1-p_1^0) & -N_0 p_1^0 p_2^0\\
        0 & 0 & -N_0 p_1^0 p_2^0 & N_0 p_2^0(1-p_2^0)
    \end{pmatrix}
\end{align*}
Using the multivariate $\delta$-method for some $g(U_1, U_2 ,U_3, U_4 )= \mathrm{log}(U_1)- \mathrm{log}(U_2)-\mathrm{log}(U_3)+\mathrm{log}(U_4)$ with the gradient $\nabla g(\mathbf{U}) =  \Big(\frac{1}{U_1}, -\frac{1}{U_2}, -\frac{1}{U_3}, \frac{1}{U_4}\Big)^\top$, we have that the variance of $g(\mathbf{C})$ is equal to
\begin{align*}
    \mathrm{Var}(g(\mathbf{C}))=\nabla g(\mathbf{U})^\top\big|_{\mathbf{U}=\mathbf{\mu}} \Sigma(\mathbf{C}) \nabla g(\mathbf{U})\big|_{\mathbf{U}=\mathbf{\mu}}= \frac{1}{N_1p_1^1}+\frac{1}{N_1p_1^2}+\frac{1}{N_0p_1^0}+\frac{1}{N_0p_1^0},
\end{align*}
which we can estimate by
\begin{align*}
\widehat{\mathrm{Var}(g(\mathbf{C}))} &= \frac{1}{\sum_{i=1}^nI(A_i=1)I(\Delta Y_i =1)}+\frac{1}{\sum_{i=1}^nI(A_i=1)I(\Delta Y_i =2)}\\
&+\frac{1}{\sum_{i=1}^nI(A_i=0)I(\Delta Y_i =1)}+\frac{1}{\sum_{i=1}^nI(A_i=0)I(\Delta Y_i =2)}
\end{align*}

The derivation of the variance shows that using incidence ratios instead of hazard ratios not only has the benefit of stability when the null hypothesis holds, but also has statistical advantages. Since the hazard ratio is calculated across different populations over time, the two arms cannot be modeled as two multinomial distributions. Their correlation is not known, as it depends on the principal strata and the exposure of individuals; therefore, to derive valid statistical tests, we have to rely on non-parametric bootstrap. Thus, having patient-level data is necessary to assess the equality of hazard ratios over time, while for the incidence ratios, summary data is sufficient.

When we only have access to summary data, such as the analysis of the BNT162b2 vaccine against COVID-19, we follow the Estimation procedure discussed by \citet{janvin_quantification_2024} in their Appendix.

\subsection{Conditional waning}\label{APP:Cond wane}
The challenge effect, and thus waning, can be defined conditional on baseline covariates.
\begin{definition}[Conditional challenge effect]
Denote the measured baseline covariates $\mathbf{L} \in \mathbf{\mathcal{L}}$. Then the conditional challenge effects at times 1 and 2 are
\begin{align*}
        VE_1^{\mathrm{challenge}}(\mathbf{l}) &= 1- \frac{E[\Delta Y_1^{a=1, e_1=1}|\mathbf{L}=\mathbf{l}]}{E[\Delta Y_1^{a=0, e_1=1}|\mathbf{L}=\mathbf{l}]},\\
        VE_2^{\mathrm{challenge}}(\mathbf{l}) &= 1- \frac{E[\Delta Y_1^{a=1, e_1=0, e_2=1}|\mathbf{L}=\mathbf{l}]}{E[\Delta Y_1^{a=0, e_1=0, e_2=1}|\mathbf{L}=\mathbf{l}]}
    \end{align*}
\end{definition}
Correspondingly, the incidence rates are
\begin{align*}
        &&IR_1(\mathbf{l}):= \frac{E[\Delta Y_1|A=1, \mathbf{L}=\mathbf{l}]}{E[\Delta Y_1|A=0, \mathbf{L}=\mathbf{l}]}  &&IR_2(\mathbf{l}):=\frac{E[\Delta Y_2|A=1, \mathbf{L}=\mathbf{l}]}{E[\Delta Y_2|A=0, \mathbf{L}=\mathbf{l}]}
    \end{align*}

Suppose that Assumption \ref{ASS: blind exp} holds only conditionally on the baseline covariates.
\begin{assumption}[Conditional exposure exchangeability]\label{ASS: cond exp exch}
    \begin{align*}
        E_1^{a} &\independent \Delta Y_2^{a, e_1=0}|E_2^{a, e_1=0}, \mathbf{L}, &E_1^{a}&\independent \Delta Y_1^{a, e_1}| \mathbf{L},\\
        E_2^{a, e_1=0} &\independent \Delta Y_2^{a, e_1=0, e_2}|\mathbf{L}.
    \end{align*}
\end{assumption}

For example, suppose that there are some \textit{measured} confounders between the exposures and the outcome, such as age or physical fitness. Then, in order for the independence to hold between the outcomes and the exposures, all confounders should be conditioned on.

Similarly, if randomization occurs conditional on the baseline covariates, Assumption \ref{ASS: Trt exch} and \ref{ASS: pos} hold conditionally.

\begin{assumption}[Conditional treatment exchangeability]\label{ASS: cond trt exch}
    $$
        E_1^{a}, \Delta Y_1^{a}, \Delta Y_1^{a, e_1}, E_2^{a, e_1}, E_2^{a, e_1 \delta y_1=0}, \Delta Y_2^{a, e_1,  \delta y_1=0, e_2} \independent A | \mathbf{L}
    $$
    for all $a, e_1, e_2 \in \{0,1\}$.
\end{assumption}

\begin{assumption}[Conditional positivity]\label{ASS: cond pos}
    $$P(A=a| \mathbf{L}=\mathbf{l})>0$$
    
    $\text{ for all } a \in \{0,1\}, \text{ and for all } \mathbf{l} \in \mathbf{\mathcal{L}}, \text{ such that } P(\mathbf{L}=\mathbf{l})>0.$
\end{assumption}

Analogously to the proof in Appendix \ref{APP: Prop proof} we can show that under Hypothesis \ref{HYP: formal null} and Assumptions \ref{ASS: Exp nec}-\ref{ASS: blind exp} and \ref{ASS: cond exp exch}
$$
    IR_1(\mathbf{l}) = IR_2(\mathbf{l}) \quad \forall  \, \mathbf{l} \in \mathcal{L}
$$

Then, the null hypothesis of no waning may be evaluated separately across subgroups; however, the analyses should control for multiple testing \citep{simes_improved_1986, benjamini_controlling_1995}.

Alternatively, assume constant proportionality across the subgroups defined by $\mathbf{L}$.
\begin{assumption}[Constant incidence ratios]\label{ASS: Const IR}
    \begin{align*}
         E[\Delta Y_1|A=1, \mathbf{L}=\mathbf{l}]&= \alpha_1E[\Delta Y_1|A=0, \mathbf{L}=\mathbf{l}]\\
         E[\Delta Y_2|A=1, \mathbf{L}=\mathbf{l}]&=\alpha_2 E[\Delta Y_2|A=0, \mathbf{L}=\mathbf{l}]\\
    \end{align*}
    for all $\mathbf{l} \in \mathcal{L}$.
\end{assumption}
Under the null hypothesis and Assumptions \ref{ASS: Exp nec} and \ref{ASS: cond exp exch}, we have that  $IR_1(\mathbf{l}) =IR_2(\mathbf{l}) $. Thus, it follows that in this setting $\alpha_1 = \alpha_2$ must hold. Then
\begin{align*}
   \frac{IR_1}{IR_2}= \frac{\int IR_1(\mathbf{l}) \mathrm{d}P(\mathbf{L})}{\int IR_2(\mathbf{l}) \mathrm{d}P(\mathbf{L})} = \int \frac{ IR_1(\mathbf{l})}{ IR_2(\mathbf{l})} \mathrm{d}P(\mathbf{L}) = \frac{\alpha_1}{\alpha_2} \int \frac{1}{1} \mathrm{d}P(\mathbf{L}) = 1,
\end{align*}
therefore, the null hypothesis implies that there is no marginal population-level waning.

\subsection{Testing for multivalued exposure}\label{APP: multi exp}
Assumption \ref{ASS: Cons} relies on the binary categorization of the exposure. Certain individuals are unexposed, and all others are exposed according to some fixed intensity of exposure. Then the challenge effects, with respect to which waning is defined, are formulated in terms of this unspecified fixed exposure. However, in certain scenarios, defining exposure as binary, is implausible, for example, if multiple strains circulate in the population with different infectivity, the intensity of exposure varies according to the strains, or, when follow-up period spans over longer periods, certain individuals might face repeated exposures, which leads to a definition of a multi-valued exposure.

We revisit the findings of \citet{janvin_quantification_2024} in their Appendix D, which shows that under a stability assumption on the distribution of the ``strength of exposure", conditional on being exposed, we can still test for the waning of the vaccine, using the partial-identification bounds derived. However, by not including in the null hypothesis the restriction of binary intensity, considerable power is lost.

Adopting the setup and the notation of \citet{janvin_quantification_2024}, suppose that there exists a finer-scale version of the randomized trial described in the main text, denoted with $G$. In this modified challenge trial $G$, individuals are randomized to treatment or control ($A \in \{0,1\}$) at baseline as before, and then at time 1 and time 2, to some intensity of exposure, denoted with $Q_1, Q_2 \in \mathcal{Q}$, respectively. $\mathcal{Q}$ can be interpreted as the conceivable range of the number of infectious particles per exposure, the number of exposures during the follow-up period, or the cross-product of the former two. Suppose that no exposure is well-defined for $Q_1$ and $Q_2$ as well, that is, there are no multiple versions of no exposure in the modified trial. In $G$, the coarse-grained exposure is defined as
$$
E_k=\mathbb{I}(Q_k\neq 0) \; \text{ and }\; E_2^{q_1=0}= \mathbb{I}(Q_2^{q_1=0}\neq 0).
$$

For brevity, we do not restate all the assumptions of \citet{janvin_quantification_2024}, interested readers should see their Assumptions 9-12 in their Appendix D. However, we restate Proposition 2, which is central to our argument:

\begin{proposition}[Restatement of Proposition 2 in \citet{janvin_quantification_2024}]\label{PROP: janvin}
Under Assumptions 9-12 of \citet{janvin_quantification_2024} and Assumptions \ref{ASS: Cons} (i) and \ref{ASS: pos} of this manuscript,
\begin{align*}
    E[\Delta Y_1^{a}|E_1^{a}=1, A=a]&=E_G[E_G[\Delta Y_1^{a, Q_1}]|E_1=1, A=a]\\
    E[\Delta Y_2^{a, e_1=0}|E_2^{a, e_1=0}=1, A=a]&=E_G[E_G[\Delta Y_2^{a, q_1=0, Q_2^{q_1=0}}]|E_2^{q_1=0}=1, A=a].
\end{align*}

    Under the further Assumptions \ref{ASS: Exp nec}, \ref{ASS: blind exp}, \ref{ASS: Trt exch} from our manuscript and Assumption 4 of \citet{janvin_quantification_2024},

    \begin{align}
        1-\frac{E_G[E_G[\Delta Y_1^{a=1, Q_1}]|E_1=1, A=1]}{E_G[E_G[\Delta Y_1^{a=0, Q_1}]|E_1=1, A=0]}]&=1-\frac{E[\Delta Y_1|A=1]}{E[\Delta Y_1|A=0]}\label{EQ: multi exp 1}\\
        \hspace{-1em}1-\frac{E_G[E_G[\Delta Y_1^{a=1, q_1=0, Q_2^{q_1=0}}]|E_2^{q_1=0}=1, A=1]}{E_G[E_G[\Delta Y_1^{a=0, q_1=0, Q_2^{q_1=0}}]|E_2^{q_1=0}=1, A=0]}]&\in \left[1-\frac{E[\Delta Y_1+\Delta Y_2|A=1]}{E[\Delta Y_2|A=0]}, 1-\frac{E[\Delta Y_2|A=1]}{E[\Delta Y_1+\Delta Y_2|A=0]}\right]\label{EQ: multi exp 2}
    \end{align}
\end{proposition} 

See proof of the proposition in Appendix D of \citet{janvin_quantification_2024}.

\citet{janvin_quantification_2024} provided an interpretation of equations \eqref{EQ: multi exp 1}-\eqref{EQ: multi exp 2}: the right hand sides are identification formulas for the ratio of expected potential outcomes, under exposure intervention in which $e_k=0$ corresponds to isolation, and $e_k=1$ corresponds to an exposure with randomized intensity, where the investigator draws the intensity $Q_1$ and $Q_2^{q_1=0}$ from the distribution $F_{Q_1|E_1=1, A=a}$ and $F_{Q_2^{q_1=0}|E_2^{q_1=0}=1, A=a}$.

Even if there is no population-level waning, that is, if
\begin{align*}
    E_G[\Delta Y_1^{a=1, q_1=q}] &= E_G[\Delta Y_2^{a=1, q_1=0, q_2=q}]\\
    E_G[\Delta Y_1^{a=0, q_1=q}]&=E_G[\Delta Y_2^{a=0, q_1=0, q_2=q}]
\end{align*}
  with probability 1, $\forall q \in \mathcal{Q}$,
holds, the left hand sides of Equations \eqref{EQ: multi exp 1}-\eqref{EQ: multi exp 2} are not necessarily equal. If the distribution of intensities changes from time 1 to time 2, for example, because a new strain enters the population during follow-up, with a different infectivity, then the expectations taken with respect to the conditional distributions $F_{Q_1|E_1=1, A=a}$ and $F_{Q_2^{q_1=0}|E_2^{q_1=0}=1, A=a}$ need not coincide.

Thus, assume the following, stated first in \citet{janvin_quantification_2024}:

\begin{assumption}[Stationarity of exposure versions among the exposed]\label{ASS: Stat exp}
    $$P_G(Q_1\leq q|E_1=1, A=a)= P_G(Q_2^{q_1=0}\leq q|E_2^{q_1=0}, A=a) \; w.p. \; 1.$$
\end{assumption}

Assumption \ref{ASS: Stat exp} allows for different distribution of exposures ($E_1$, $E_2$) between the intervals, however, it requires that, conditional on being exposed in a given interval, the distribution of the ``strength of the exposure" is the same in the two.

With a slight reformulation of Proposition 3 of \citet{janvin_quantification_2024} we can state the following:
\begin{proposition}[Restatement of Proposition 3 in \citet{janvin_quantification_2024}]\label{PROP: janvin 2}
    Under Assumptions \ref{ASS: Exp nec},\ref{ASS: blind exp}, \ref{ASS: Cons} (i), (ii), \ref{ASS: Trt exch}, \ref{ASS: pos}, and \ref{ASS: Stat exp} from our manuscript and Assumptions 4, 9-12 from \citet{janvin_quantification_2024}, the null hypothesis
    \begin{align*}
    E_G[\Delta Y_1^{a=1, q_1=q}] &= E_G[\Delta Y_2^{a=1, q_1=0, q_2=q}]\\
    E_G[\Delta Y_1^{a=0, q_1=q}]&=E_G[\Delta Y_2^{a=0, q_1=0, q_2=q}]
\end{align*}
for all $q \in \mathcal{Q}$ implies
\begin{equation}\label{EQ: multiwaning test}
    \frac{E[\Delta Y_2|A=1]}{E[\Delta Y_1+\Delta Y_2|A=0]}\leq \frac{E[\Delta Y_1|A=1]}{E[\Delta Y_1|A=0]}\leq \frac{E[\Delta Y_1+\Delta Y_2|A=1]}{E[\Delta Y_2|A=0]}
\end{equation}
\end{proposition}
See proof in Appendix D of \citet{janvin_quantification_2024}.

Thus, two one-sided tests based on observed quantities can be constructed for the null hypothesis of no waning, whether the confidence interval for the difference between $\frac{E[\Delta 
Y_1|A=1]}{E[\Delta 
Y_1|A=0]}$ and the corresponding bounds excludes $0$. However, the rejection of the null hypothesis implies only that there exists at least one intensity $q\in \mathcal{Q}$ for which the expected potential outcome changed over time, but we cannot identify, the subset $\mathcal{Q'}\subseteq \mathcal{Q}$, for which the vaccine has waned.

Our null hypothesis, Hypothesis \ref{HYP: formal null} is contingent on binary exposure intensity: an individual is either exposed or not. While waning under multiple versions of exposures can still be tested, possibly wide bounds
$$
\left[\frac{E[\Delta Y_2|A=1]}{E[\Delta Y_1+\Delta Y_2|A=0]},\frac{E[\Delta Y_1+\Delta Y_2|A=1]}{E[\Delta Y_2|A=0]}\right]
$$
come at the cost of substantial loss of power. Therefore, the originally proposed test can be interpreted as a joint hypothesis of no waning and an all or nothing vaccine effect under exposure. When the original test of the equality of incidence ratios is rejected, this may reflect either waning vaccine effects or ``leaky" vaccines \citep{halloran_thirty-five_2024}, in which framework, the expected potential outcome under exposure is sensitive to exposure intensity.

The reason why, in settings with multiple exposure intensities, not even a sharp null hypothesis of no waning yields a valid test based on the equality of incidence ratios is that the joint distribution of exposure intensities ($Q_1, Q_2$) and susceptibility ($U_Y$) is unknown. It is possible that some individuals exposed at time 1 had underlying susceptibility such that $\Delta Y_1^{a, q_1=q}=1$ for any $q>0$. Had they been isolated and then exposed at time 2, they would develop the outcome regardless of exposure intensity, even if no waning occurred. By contrast, among those who developed the outcome at time 1, only because the exposure intensity exceeded some threshold $q^*$, had they been isolated in the first interval, the expected outcome under exposure at time 2 can be expressed as a function of the intensity distributions and $q^*$, using the stationarity Assumption \ref{ASS: Stat exp}. We conjecture that by explicitly modelling the joint distribution of susceptibility and exposure intensity, under the assumptions of Proposition \ref{PROP: janvin 2}, tighter bounds can be derived for the time 2 challenge effect, thereby increasing the power of our test. Building on results on vaccine effects in the presence of competing strains \citep{perenyi_variant_2025}, this direction will be the focus of follow-up work.

\subsection{Sensitivity analysis of the depletion of susceptibles}\label{APP: heterogen}

In this section, we present a sensitivity analysis that uses investigators' beliefs about the extent of depletion of susceptible individuals to examine the plausibility of the null hypothesis of no waning. 
Following the main text, we present our results in a setting with $2$ intervals; however, extension to a setting with $K$ intervals follows straightforwardly. 
As in \citet{janvin_quantification_2024}, we define the parameter $\psi$ as the ratio of the expected potential outcomes under treatment at the two timepoints, in a challenge trial:
$$
\psi := \frac{E[\Delta Y_1^{a=1, e_1=1}]}{E[\Delta Y_2^{a=1, e_1=0, e_2=1}]}.
$$
Suppose that no waning of the placebo holds, at least in the population sense, that is 
\begin{align}
    E[\Delta Y_1^{a=0, e_1=1}]=E[\Delta Y_2^{a=0, e_1=0, e_2=1}], \label{eq:no_waning_placebo_expectation}
\end{align}
then 
$$
\psi = \frac{1-VE_1^{\mathrm{challenge}}}{1-VE_2^{\mathrm{challenge}}}.
$$
Observing $\psi\neq 1$ shows that the challenge effect changed over time, thus the vaccine has waned in either direction. \citet{janvin_quantification_2024} describe the challenge trial corresponding to the estimand $\psi$, which bears similarities to the trials using delayed vaccination to assess waning, when the infectious agent enters the population only after a certain calendar time \citep{lipsitch_depletion--susceptibles_2019, ray_depletion--susceptibles_2020}.

The following proposition decomposes the challenge effect into a ratio of hazard ratios (HR ratio) and an error term that reflects the departure of the challenge effect from the HR ratio.
\begin{proposition}\label{prp:hazard_ratio_decomposition}
    Under Assumptions \ref{ASS: Exp nec}, \ref{ASS: blind exp}, \ref{ASS: exp exch}, \ref{ASS: Cons}, \ref{ASS: Trt exch}, \ref{ASS: pos}, and \eqref{eq:no_waning_placebo_expectation}
    \begin{align}
    &\psi= \frac{E[\Delta Y_1\mid A=1]}{E[\Delta Y_1\mid A=0] }\times \frac{E[\Delta Y_2\mid \Delta Y_{1}=0, A=0]}{E[\Delta Y_2\mid \Delta Y_{1}=0, A=1] } \notag\\
    &\qquad\times \frac{E[\Delta Y_2^{e_1=0}\mid A=0]/E[\Delta Y_2\mid \Delta Y_{1}=0,A=0]  }{E[\Delta Y_2^{e_1=0}\mid A=1]/E[\Delta Y_2\mid \Delta Y_{1}=0,A=1]} \label{eq:hazard_ratio_decomposition}.
\end{align}
\end{proposition}
The proof of Proposition~\ref{prp:hazard_ratio_decomposition} follows from the proof of Theorem~1 in \citet{janvin_quantification_2024}. The error term in the second line of \eqref{eq:hazard_ratio_decomposition} can differ from 1 due to $(i)$ differential depletion of susceptible individuals over time and $(ii)$ differential acquisition of natural immunity from undetected infections in interval 1 across treatment arms. In the absence of $(i)$ and $(ii)$, the challenge effect is equal to the HR ratio, as formalized in Proposition~1 of \citet{janvin_quantification_2024}.
When data is available on infectious outcomes not targeted by the treatment, that is, negative control outcomes, the bias introduced by depletion of susceptibles can be assessed using methods introduced by \citet{ashby_debiasing_2025}. Consider the two-interval setting introduced in the main text, where in addition to the primary outcome, $\Delta N_k$ indicates whether the negative control outcome occurs during interval $k$. Then, adapting the estimand introduced by \citet{ashby_debiasing_2025}, the relative change in susceptibility between the two arms can be measured by
$$
\frac{E[\Delta N_2|\Delta Y_1 = \Delta N_1=0, A=1]}{E[\Delta N_2|\Delta Y_1 = \Delta N_1=0, A=0]}.
$$
Under suitable assumptions on the confounding structure, this quantity can be used to debias the hazard ratio, thereby improving our ability to detect waning. Even though for similar outcomes such as COVID-19 infection and other acute respiratory infections, it may be reasonable to assume that the same unmeasured variables act as confounders for both outcomes, the functional form of their effects may differ across outcomes. Thus, while exact results on identification require strong and untestable assumptions, the approach of \citet{ashby_debiasing_2025} may still be useful as an exploratory tool for detecting residual confounding.

Next, we will describe a sensitivity analysis that uses an investigator's belief about the size of the error term in \eqref{eq:hazard_ratio_decomposition}. Suppose an investigator believes that 
\begin{align}
    1\leq \frac{E[\Delta Y_2^{e_{1}=0}\mid A=a]}{E[\Delta Y_2\mid \Delta Y_{1}=0,A=a]} \leq \alpha \label{eq:heterogeneity_belief}
\end{align}
for all $a\in\{0,1\}$. In words, the investigator believes that isolation during interval 1, which retains susceptible individuals in the risk set and prevents natural immunity from undetected infections, can at most lead to an $\alpha$-fold increase in infections during interval 2 in either treatment group. The lower bound in \eqref{eq:heterogeneity_belief} is a modest tightening of the bound in Assumption 4 (exposure effect restriction) of \citet{janvin_quantification_2024} reflecting that isolation does not make the population at risk more protected against infectious exposures in the future.

\begin{proposition}[Sensitivity analysis\label{prp:sensitivity analysis}]
    Suppose Assumptions \ref{ASS: Exp nec}, \ref{ASS: blind exp}, \ref{ASS: exp exch}, \ref{ASS: Cons}, \ref{ASS: Trt exch}, \ref{ASS: pos} and \eqref{eq:no_waning_placebo_expectation}, \eqref{eq:heterogeneity_belief} hold. Next, suppose that $\psi=1$, i.e.\ the null hypothesis of no waning holds in expectation. Then
    \begin{align}
        \bigg(\frac{E[\Delta Y_1\mid A=1]}{E[\Delta Y_1\mid A=0] }\times \frac{E[\Delta Y_2\mid \Delta Y_{1}=0, A=0]}{E[\Delta Y_2\mid \Delta Y_{1}=0, A=1] }\bigg)^{-1} \leq  \alpha \label{eq:smallest_heterogeneity}~.
    \end{align}
\end{proposition}
Expression \eqref{eq:smallest_heterogeneity} follows directly from using \eqref{eq:heterogeneity_belief} and $\psi=1$ in \eqref{eq:hazard_ratio_decomposition}. The left hand side of \eqref{eq:smallest_heterogeneity} is the (reciprocal) HR ratio, which is only a functional of the observed data, whereas $\alpha$ is the investigator's belief about the maximal possible value of the ratios $E[\Delta Y_2^{e_{1}=0}\mid A=a]/E[\Delta Y_2\mid \Delta Y_{1}=0,A=a]$ for $a\in\{0,1\}$. If the investigator's chosen value of $\alpha$ is too small to satisfy \eqref{eq:smallest_heterogeneity}, we obtain a contradiction: the observed HR ratio cannot alone be explained by the investigators' belief ($\alpha$) about the depletion of susceptible individuals. For example, suppose that
\begin{align*}
    &\alpha =5, \\
    &E[\Delta Y_1\mid A=1]=0.01, \quad E[\Delta Y_1\mid A=0]=0.1,\\
    &E[\Delta Y_2\mid \Delta Y_1=0,A=1]=E[\Delta Y_2\mid \Delta Y_1=0,A=0]=0.1,
\end{align*}
then Proposition~\ref{prp:sensitivity analysis} gives the contradiction $10\leq 5$. Whenever \eqref{eq:smallest_heterogeneity} leads to a contradiction, this implies that the premises assumed in Proposition~\ref{prp:sensitivity analysis} are false: in particular, the null hypothesis of no waning ($\psi=1$) may be implausible in light of the investigators' belief $(\alpha)$. Although it is not obvious how investigators can agree on a particular choice of $\alpha$, Proposition~\ref{prp:sensitivity analysis} offers a way to quantitatively state beliefs about the depletion of susceptible individuals and use these to assess the plausibility of the null hypothesis of no waning.

\subsection{Interpretation of the p-value under parametric assumptions}\label{APP: p-val}
We introduced a test for waning that is valid under non-parametric assumptions and is agnostic to the exact form of the data-generating mechanism. Under these assumptions, while the p-value is informative about the existence of waning, it is not necessarily informative about its magnitude, that is, a lower p-value does not necessarily imply a larger discrepancy between the time 1 and time 2 challenge effects. However, if the mechanism underlying waning is specified, for example, by defining waning as a particular transition from one principal stratum to another, then stronger claims can be made about the magnitude of waning.

Similarly to Appendix \ref{APP: Simulation} consider the following data-generating mechanism: 
\begin{align*}
    U_Y& \sim F_{U_Y}\\
    U_E& \sim F_{U_E}\\
    A &:= \mathrm{Bernoulli}(\pi)\\
    T_1|U_Y &:=\mathrm{Categorical}(p_{doomed_1}(U_Y), p_{helped_1}(U_Y), p_{harmed_1}(U_Y), p_{immune_1}(U_Y))\\
    W| U_Y,T_1
& :=
\begin{cases}
\mathrm{Bernoulli}(\omega(U_Y)), & T_1=\mathrm{helped},\\
0, & T_1\neq \mathrm{helped}.
\end{cases}\\
    T_2|T_1, W &:=
\begin{cases}
\mathrm{doomed}, & T_1=\mathrm{doomed},\\
\mathrm{doomed}, & T_1=\mathrm{helped},\ W=1,\\
\mathrm{helped}, & T_1=\mathrm{helped},\ W=0,\\
\mathrm{harmed}, & T_1=\mathrm{harmed},\\
\mathrm{immune}, & T_1=\mathrm{immune}.
\end{cases}\\
    E_1|U_E &:= \mathrm{Bernoulli}(p_{E_1}(U_E))\\
    \Delta Y_1 &:= E_1 \times \big(I(T_1=\text{doomed})+(1-A)\times I(T_1=\text{helped})+A\times I(T_1=\text{harmed})\big)\\
    E_2|U_E, \Delta Y_1 &:= \mathrm{Bernoulli}(p_{E_2}(U_E, \Delta Y_1))\\
    \Delta Y_2 &:= (1-\Delta Y_1) \times E_2 \times \big(I(T_2=\text{doomed})+(1-A)\times I(T_2=\text{helped})+A\times I(T_2=\text{harmed})\big).
\end{align*}
In the proposed data-generating mechanism, we assumed that the only possible transition between principal strata over time is from \textit{helped} to \textit{doomed}. This transition corresponds to a decrease in vaccine efficacy, while the placebo response is assumed not to wane. Therefore, individual-level vaccine improvement is ruled out by construction, which is a strong and possibly unrealistic assumption. This can be interpreted as a monotonicity assumption over time, requiring that susceptibility under treatment can only increase, not decrease.

For ease of notation, let $S_{a,k}$ denote the indicator that an individual would develop the outcome at time $k$ under treatment $a$ and exposure, had they remained event-free up to that time. For example, $S_{1,1}=1$ indicates that the individual belongs to $T_1 \in \{\text{doomed}, \text{harmed}\}$, whereas $S_{0,2}=1$ indicates that the individual belongs to $T_2 \in \{\text{doomed}, \text{helped}\}$. In this notation, we have
\begin{align*}
    1-VE_1^{\mathrm{challenge}}&=\frac{P(S_{1,1}=1)}{P(S_{0,1}=1)}=\frac{P(T_1 \in \{\text{doomed, harmed}\})}{P(T_1 \in \{\text{doomed, helped}\})},\\
    1-VE_2^{\mathrm{challenge}}&=\frac{P(S_{1,2}=1)}{P(S_{0,2}=1)}=\frac{P(T_2 \in \{\text{doomed, harmed}\})}{P(T_2 \in \{\text{doomed, helped}\})}.
\end{align*}
In contrast, the incidence ratios are
\begin{align*}
    IR_1&=\frac{P(E_1=1, S_{1,1}=1)}{P(E_1=1, S_{0,1}=1)},\\
    IR_2&=\frac{P(E_2=1, \Delta Y_1=0,  S_{1,2}=1)}{P(E_2=1, \Delta Y_1=0, S_{0,2}=1)}.
\end{align*}
Since in the specified data-generating mechanism no waning of the placebo holds, $S_{0,1}=S_{0,2}$. Moreover, because the only possible transition between principal strata is from \textit{helped} to \textit{doomed}, we have $S_{1,1}\leq S_{1,2}$

Since the principal strata, $T_1$, is independent of the exposure $E_1$, similarly to Proposition \ref{PROP: inv incidence}, we obtain
$$
IR_1=\frac{P(E_1=1, S_{1,1}=1)}{P(E_1=1, S_{0,1}=1)}=\frac{P(S_{1,1}=1)}{P(S_{0,1}=1)}=1-VE_1^\mathrm{challenge}.
$$
To rewrite $IR_2$, we consider the two arms separately. Using the notation for susceptibility, we can rewrite 
$$
E[\Delta Y_2] = \begin{cases}
    E[(1-\Delta Y_1)E_2 S_{1,2}]=E[(1-E_1S_{1,1}) E_2 S_{1,2}] \text{ if } A=1,\\
    E[(1-\Delta Y_1)E_2 S_{0,2}]=E[(1-E_1S_{0,1}) E_2 S_{0,2}] \text{ if } A=0.
\end{cases}
$$
Since no waning of the placebo holds, $(1-E_1S_{0,1})S_{0,2}=(1-E_1)S_{0,1}$, thus
$$
E[\Delta Y_2|A=0]=E[(1-E_1)p_{E_2}(0, U_E)]P(S_{0,1}=1),
$$
where we further used that we are only interested in the time 2 exposure of the survivors, thus making the time 2 exposure independent of the time 1 susceptibility.

For the treated, $S_{1,2}=1$ can occur in two disjoint ways: either the individual was already susceptible at time 1, $S_{1,1}=1$, or the individual becomes newly susceptible due to waning, $\{S_{1,1}=0, W=1, S_{1,2}\}=\{T_1=\text{helped}, W=1\}$. The first group would develop the outcome at time 1, if they were exposed, hence, they contribute to the time 2 events only if they were unexposed at time 1 and then exposed at time 2, which happens with probability $E[(1-E_1)p_{E_2}(0, U_E)]$.

In the second group, individuals remain event-free at time 1, regardless of exposure, thus only the time 2 exposure needs to be considered, which occurs with probability $E[p_{E_2}(0, U_E)]$. Putting these two groups together, the observed time 2 event rate under treatment is 
\begin{align*}
    E[\Delta Y_2|A=1]&=E[(1-E_1)p_{E_2}(0, U_E)]P(S_{1,1}=1)\\
    &+E[p_{E_2}(0, U_E)]P(S_{1,1}=0, W=1, S_{1,2}=1),
\end{align*}
thus $IR_2$ is
\begin{align*}
    IR_2&=\frac{E[(1-E_1)p_{E_2}(0, U_E)]P(S_{1,1}=1)
    +E[p_{E_2}(0, U_E)]P(S_{1,1}=0, W=1, S_{1,2}=1)}{E[(1-E_1)p_{E_2}(0, U_E)]P(S_{0,1}=1)}.
\end{align*}

Our test statistic is $\mathrm{log}(\widehat{IR}_2)-\mathrm{log}(\widehat{IR}_1)=\mathrm{log}\left(\frac{\widehat{IR}_2}{\widehat{IR}_1}\right)$. Under the proposed data-generating mechanism, the ratio of the incidence ratios is equal to
\begin{align*}
    \frac{IR_2}{IR_1}&=\frac{E[(1-E_1)p_{E_2}(0, U_E)]P(S_{1,1}=1)
    +E[p_{E_2}(0, U_E)]P(S_{1,1}=0, W=1, S_{1,2}=1)}{E[(1-E_1)p_{E_2}(0, U_E)]P(S_{1,1}=1)}\\
    &=1+\frac{E[p_{E_2}(0, U_E)]}{E[(1-E_1)p_{E_2}(0, U_E)]}\frac{P(S_{1,1}=0, W=1, S_{1,2}=1)}{P(S_{1,1}=1)}.
\end{align*}
However, $P(S_{1,1}=0, W=1, S_{1,2}=1)$ is precisely the proportion of individuals for whom waning has occurred, that is, the extent of waning. By analogous rewriting, we can show that
$$
\frac{1-VE_2^\mathrm{challenge}}{1-VE_1^\mathrm{challenge}}=1+\frac{P(S_{1,1}=0, W=1, S_{1,2}=1)}{P(S_{0,1}=1)}.
$$
Therefore, provided that the probability of exposure among survivors at time 2 is non-zero, the test statistic is strictly increasing in the waning term. Equivalently, as the ratio  $\frac{1-VE_2^\mathrm{challenge}} {1-VE_1^\mathrm{challenge}}$ increases, the test statistic increases as well.

In conclusion, under the proposed parametrized version of waning, the test statistic and the corresponding p-value are informative about the extent of waning. Thus, under suitable modeling assumptions, the test can be used not only to detect waning but also to quantify it.

\end{document}